\pgfplotsset{compat=1.17}
    \pgfplotsset{
    name nodes near coords/.style={
        every node near coord/.append style={
            name=#1-\coordindex,
            alias=#1-last,
        },
    },
    name nodes near coords/.default=coordnode
    }
\newtheorem{claim}{Claim}[section]
\newtheorem{theorem}{Theorem}[section]
\newtheorem{lemma}[theorem]{Lemma}
\newtheorem{observation}[theorem]{Observation}
\newtheorem{example}[theorem]{Example}
\newtheorem{definition}[theorem]{Definition}
\newenvironment{proofof}[1]{{\vspace*{5pt} \noindent\bf Proof of #1:  }}{\hfill\rule{2mm}{2mm}\vspace*{5pt}}
\newtheorem*{rep@theorem}{\rep@title}
\newcommand{\newreptheorem}[2]{%
\newenvironment{rep#1}[1]{%
 \def\rep@title{#2 \ref{##1}}%
 \begin{rep@theorem}}%
 {\end{rep@theorem}}}
\newcommand{\bR}{{\mathbb{R}}}
\newcommand{\MMS}{\mathsf{MMS}}
\newcommand{\bX}{\mathbf{X}}
\newcommand{\bc}{\mathbf{c}}
\newcommand{\cI}{\mathcal{I}}
\DeclareMathOperator*{\argmax}{argmax}
\DeclareMathOperator*{\argmin}{argmin}
\title{Guaranteeing MMS for All but One Agent When Allocating Indivisible Chores}
\author{
Jiawei Qiu\thanks{IOTSC, University of Macau. \{mc25118,xiaoweiwu,yc27429,yc17423\}@um.edu.mo. The authors are ordered alphabetically.}
\and
Xiaowei Wu $^*$
\and 
Cong Zhang $^*$
\and 
Shengwei Zhou $^*$
}
\begin{document}

\maketitle

\begin{abstract}
We study the problem of allocating $m$ indivisible chores to $n$ agents with additive cost functions under the fairness notion of maximin share (MMS). 
In this work, we propose a notion called $\alpha$-approximate all-but-one maximin share ($\alpha$-AMMS) which is a stronger version of $\alpha$-approximate MMS. 
An allocation is called $\alpha$-AMMS if $n-1$ agents are guaranteed their MMS values and the remaining agent is guaranteed $\alpha$-approximation of her MMS value. 
We show that there exist $\alpha$-AMMS allocations, with $\alpha = 9/8$ for three agents; $\alpha = 4/3$ for four agents; and $\alpha = (n+1)^2/4n$ for $n\geq 5$ agents.

\end{abstract}

\section{Introduction}
Since 1948 when Steinhaus~\cite{steinhaus1948problem} proposed the fair allocation problem, it has been a central problem in the area of computer science, economics, and social choice theory.
Mainstream research on the allocation area can be characterized into two categories, one is resource allocation where items are goods, and the other is task allocation where items are chores.

In this work, we consider the chores allocation of assigning a set of $m$ indivisible chores $M$ to a set of $n$ agents.
For each agent $i\in N$, we use $c_i: 2^M \xrightarrow{} \bR^+ \cup \{ 0 \}$ to denote her cost function, which is additive, i.e., for any bundle $S \subseteq M, c_i(S) = \sum_{e \in S}c_i(e)$. 
We consider the discrete setting where each chore needs to be allocated to exactly one agent.
When items are divisible, a natural and well-studied criterion is \emph{proportionality}~\cite{steinhaus1948problem} that requires each agent to receive at most her proportional cost of all chores.
While proportionality is no longer guaranteed to exist in the discrete setting\footnote{Considering allocating one item to two agents.}, some refined criteria have received much attention.
Among all the notions, maximin share (MMS) is one of the most popular criteria since introduced by Budish $et$ $al.$~\cite{conf/bqgt/Budish10}.
For each agent $i$, her maximin share $\MMS_i$ is defined as the minimum cost she can guarantee if she divides all the items into $n$ bundles and picks a bundle at last.
When there are only two agents, the definition of MMS coincides with the well-studied mechanism \emph{divide-and-choose}.
Unfortunately, when there are more than two agents, MMS allocations are not guaranteed to exist for both goods and chores~\cite{journals/jacm/KurokawaPW18,conf/wine/FeigeST21,conf/aaai/AzizRSW17}.
Much research tends to consider relaxations such as the approximation of MMS.

\paragraph{Approximate MMS.}
While no algorithm can achieve approximation ratios smaller than 44/43 for chores~\cite{conf/wine/FeigeST21}, the very first upper bound of the approximation ratio is $2- 1/n$ shown by~\cite{conf/aaai/AzizRSW17} where the algorithm only involves ordinal preference on items.
The result is further improved to $5/3$ by Aziz $et$ $al.$~\cite{journals/mp/AzizLW24} and then to $3/2$ by Feige and Huang~\cite{conf/sigecom/FeigeH23}.
When considering the cardinal information, Barman and Krishnamurthy~\cite{journals/teco/BarmanK20} showed the existence of $4/3$ approximation of MMS, which was further improved to $11/9$ in~\cite{conf/sigecom/HuangL21}, and then to $13/11$ by Huang and Segal-Halevi~\cite{conf/sigecom/HuangS23}.
For the analogous problem of goods, there are also much research working on the approximation of MMS~\cite{journals/jacm/KurokawaPW18,journals/teco/BarmanK20,journals/mor/GhodsiHSSY21,journals/ai/GargT21,conf/ijcai/AkramiGST23}, which lead to the state-of-the-art ratio of $3/4 + 3/3836$~\cite{conf/soda/AkramiG24}.

\medskip

While the above works focused on guaranteeing all agents a relaxed threshold (rather than MMS), for the allocation of goods, there is a line of research focusing on guaranteeing a fraction of agents their MMS.
The goal is to investigate the maximum fraction of agents that can be guaranteed.
To evaluate the allocation, Hosseini and Searns~\cite{conf/ijcai/HosseiniS21} proposed the framework of $(\alpha,\beta)$-MMS, that guarantees $\alpha$-fraction of agents $\beta$-approximation of MMS. For the remaining $(1-\alpha)$-fraction of agents, there is no guarantee. They show that $(\frac{n-1}{n},1)$-MMS allocations always exist for $n\le 4$ and this is the first time that AMMS allocation was studied in fair allocation of indivisible goods setting
When applying the $(\alpha, \beta)$-MMS framework to the chores setting, guaranteeing most agents their MMS value is trivial and might lead to extreme, e.g., allocating all chores to one single agent is a $(\frac{n-1}{n}, 1)$-MMS allocation. 
We study the approximation of MMS allocation from another angle and introduce a framework for the chores allocation called \emph{all-but-one maximin share} (AMMS).
An allocation is called $\alpha$-AMMS if there are $n-1$ agents guaranteed their MMS and the remaining agent is guaranteed $\alpha$-approximation of her MMS.
A natural question that we want to ask is
\begin{center}
    \textit{When ensuring MMS for $n-1$ agents, what is the smallest approximation of MMS we can guarantee for the remaining agent?}
\end{center}

In contrast to the $(\frac{n-1}{n},1)$-MMS framework, which does not provide any guarantee for the single agent who is not guarantee her MMS, the AMMS framework aims to establish an upper bound on the cost incurred by that agent.
Besides, the traditional approximate-MMS framework can not imply any results of AMMS allocations due to the indivisibility of items. 
By adopting an alternative perspective on the (approximation of) MMS allocations, the AMMS framework offers a potentially different understanding of the MMS concept.

\subsection{Our Results}
In this paper, we consider the existence of $\alpha$-AMMS allocations for indivisible chores while minimizing the approximation guarantee for the last agent.
We first consider instances with three agents, for which there exists a hardness showing that no algorithm can guarantee $\alpha$-approximation of MMS allocations while $\alpha < 44/43$~\cite{conf/wine/FeigeST21}. This hardness also provides a lower bound of $44/43$ with respect to AMMS allocations for three agents.
For the upper bound, we show the existence of $9/8$-AMMS allocations for three agents.
We further extend our result to instances with four agents and show the existence of $4/3$-AMMS allocations.

\smallskip
\noindent
{\bf Result 1} (Theorem~\ref{thm:three-agents}) {\bf .}
{\em For any instance with three agents, there exist $9/8$-AMMS allocations.}
\smallskip

\noindent
{\bf Result 2} (Theorem~\ref{theorem:fourAgents}) {\bf .}
{\em For any instance with four agents, there exist $4/3$-AMMS allocations.}
\smallskip

We show the existence of $9/8$-AMMS allocations for three agents, by revisiting and investigating some classic procedures, e.g., \emph{divide-and-choose} and \emph{load-balancing}.
We further show that our analysis is tight, by providing an instance in which our algorithm returns a $9/8$-AMMS allocation.
We then move to four agent instances and show the existence of $4/3$-AMMS allocations.
Part of our analysis of Result 2 is inspired by the~\emph{atomic bundle} framework~\cite{conf/wine/FeigeST21,journals/corr/abs-2205-05363} that involves the intersection of two bundles, each of which is from a different agent's MMS partitions.
By applying the~\emph{atomic bundle} framework, we characterize cases with nice properties, for which we extend the classic \emph{load-balancing} procedure to the four agent instances.

Finally, we consider instances with a general number of agents.
We present an upper bound of $(n+1)^2/4n$ for the approximation ratio of AMMS allocations. 
Our technique is based on a matching framework between agents and bundles and involves the reduction from general instances to new instances with smaller sizes. 
More specifically, we construct a bipartite graph $G$ called the MMS-feasibility graph in which one side is the set of agent $N$ and the other side is an $n$-partition of items $M$.
If there exists a perfect matching in graph $G$, then there exists an MMS allocation.
Otherwise, we show that by guaranteeing some agents their MMS, we can focus on the remaining agents as a reduced instance.

\smallskip
\noindent
{\bf Result 3} (Theorem~\ref{thm:general-number-agents}) {\bf .}
{\em For any instance with $n\geq 5$ agents, there exist $\frac{(n+1)^2}{4n}$-AMMS allocations.}
\smallskip

We remark that similar connections between matching and fair allocation have also been revealed in previous research.
For example, Gan $et$ $al.$~\cite{journals/mss/GanSV19} used a matching framework to compute an envy-free assignment in house allocation problems. 
Kobayashi $et$ $al.$~\cite{conf/sagt/KobayashiMS23} used a  matching-based approach to study the EFX allocation problem of indivisible chores. Bu $et$ $al.$~\cite{journals/corr/abs-2404-18133} used a matching framework to study the problem of fair allocation with comparison-based queries.

\subsection{Related Works}

Since the notion of MMS has been well-studied, here we only introduce some closely related works. 
For the allocation of chores, 19/18-MMS allocations are known to exist for three agents, guaranteeing that two of the agents achieve their approximate MMS values ~\cite{journals/corr/abs-2205-05363}.
For four to seven agents, $20/17$-MMS allocation exists \cite{conf/sigecom/HuangS23}.
For the allocation of goods, Feige and Norkin~\cite{journals/corr/abs-2205-05363} showed the existence of $11/12$-MMS allocations for three agents, and showed an upper bound of $39/40$.
For four agents, $4/5$-MMS allocations are guaranteed to exist~\cite{journals/mor/GhodsiHSSY21, conf/sigecom/BabaioffF22}.
Apart from the cardinal approximation, several works focused on the ordinal approximation of MMS, called $1$-$out$-$of$-$d$ MMS~\cite{journals/isci/Aigner-HorevS22,conf/ijcai/HosseiniS21,journals/jair/HosseiniSS22}.
The current state-of-the-art results are $1$-$out$-$of$-$\lfloor \frac{3n}{4} \rfloor$ MMS for chores~\cite{conf/atal/HosseiniSS22} and $1$-$out$-$of$-$\lceil \frac{4n}{3} \rceil$ for goods~\cite{journals/corr/abs-2307-12916}.

MMS allocations have also been studied in some generalized settings, e.g., beyond additive functions~\cite{conf/nips/0037WZ23,journals/ai/GhodsiHSSY22,journals/ai/SeddighinS24,conf/nips/AkramiMSS23,journals/corr/abs-2303-12444}. 
Zhou $et$ $al.$~\cite{conf/icml/0002B023} studied (approximate) MMS allocations in the online setting with normalization assumption. 
Besides, there is a line of literature focusing on allocating items to strategic agents while guaranteeing (approximate) MMS allocations~\cite{conf/atal/BarmanG0KN19, conf/ijcai/AmanatidisBM16, conf/sigecom/AmanatidisBCM17, journals/mp/AzizLW24}.
For asymmetric agents, MMS inspired several fairness notions such as weighted MMS (WMMS)~\cite{journals/jair/FarhadiGHLPSSY19,conf/ijcai/0001C019} and AnyPrice Share (APS)~\cite{conf/sigecom/BabaioffEF21, conf/sigecom/FeigeH23}. 
Furthermore, Pairwise MMS~\cite{journals/teco/CaragiannisKMPS19} and Groupwise MMS~\cite{conf/aaai/BarmanBMN18} are defined to consider the fairness between pairwise or groupwise agents.

Very recently, Akrami $et$ $al.$~\cite{journals/corr/abs-2307-12916} proposed a framework called $T$-MMS that guarantees different approximation ratios for different agents. 
For a non-increasing sequence $T=(\tau_1,\dots,\tau_n)$ of numbers, if the agent at rank $i$ in the order gets a bundle of value at least $\tau_i$ times her MMS value, then such allocation is said to be a $T$-MMS allocation. 
Akrami $et$ $al.$ showed that there exist $T$-MMS allocations where $\tau_i\ge \max(\frac{3}{4}+\frac{1}{12n},\frac{2n}{2n+i-1})$ for all $i\in N$.

\paragraph{$\alpha$-AMMS Allocations for Goods} Hosseini $et$ $al.$~\cite{conf/ijcai/HosseiniS21} show that $(\frac{n-1}{n},1)$-MMS allocations always exist for $n\le 4$. When $n\ge 5$, the existence of $(\frac{n-1}{n},1)$-MMS allocations remains unknown. Furthermore, the existence of $(\frac{n-1}{n},1)$-MMS allocations can imply the existence of 1-out-of-$(n + 1)$ MMS allocations which is a major open problem for the ordinal approximation of MMS~\cite{conf/ijcai/HosseiniS21, journals/corr/abs-2307-12916}.

\section{Preliminary}
We consider allocating a set of $m$ indivisible chores to $n$ agents. 
We use $M = \{e_1, \ldots, e_m\}$ to denote the set of items and $N = \{1, \ldots , n\}$ to denote the set of agents, respectively.
Every agent $i\in N$ has an additive cost function $c_i : 2^M \to \bR^+ \cup \{ 0 \}$. 
An instance is denoted by $\cI = (M, N, \bc)$, where $\bc = (c_1, \ldots, c_n)$ is the set of cost functions. 
A $k$-partition of items $M$ is denoted by $P = \{ P_1, \ldots, P_k \}$ where $P_i \cap P_j = \emptyset$ for all $i \neq j$ and $\cup_{1 \leq i \leq k}P_i = M$. 
Specifically, we call an $n$-partition an allocation $\bX = (X_1, \ldots, X_n)$ if $X_i$ is the bundle allocated to agent $i$. 
For convenience of notation, we use $X_i+e$ to denote $X_i\cup \{e\}$ and $X_i-e$ to denote $X_i\setminus \{e\}$. 
We use $[k]$ to denote $\{1,\dots,k\}$.
Given any $k$, we let $\Pi_k(M)$ be the set of all $k$-partitions of $M$. 

\begin{definition}[Proportionality]
    Given an instance $\cI = (M, N, \bc)$, for any $i \in N$ the proportional share of agent $i$ is $\frac{1}{n} \cdot c_i(M)$.
    An allocation is said to satisfy proportionality (PROP) if for every agent $i\in N$, $c_i(X_i)\le \frac{1}{n} \cdot c_i(M)$.
\end{definition}

\begin{definition}
    Given a set of items $M' \subseteq M$, a constant $k \leq n$, the $(M',k)$-maximin share of agent $i$ is defined as follows:
    \begin{align*}
        \MMS_i(M', k) = \min_{P \in \Pi_k(M')}\max_{j\in[k]} c_i(P_j).
    \end{align*}
\end{definition}

An $n$-partition of $M$ is called an MMS partition for agent $i$, if all the bundles in this partition have cost at most $\MMS_i(M,n)$.

\begin{definition}[$\alpha$-MMS]
    An allocation $\bX$ is said to satisfy $\alpha$-approximate maximin share for some $\alpha \geq 1$, if for every agent $i\in N$,
    \begin{equation*}
        c_i(X_i) \le \alpha \cdot \MMS_i(M,n).
    \end{equation*}
\end{definition}

When $\alpha = 1$, we call the allocation an MMS allocation.
For convenience, we simply write $\MMS_i$ when $M$ and $n$ are clear from the context.
Without loss of generality, for any agent $i\in N$ we normalize $c_i$ such that $\MMS_i = 1$. 
A bundle $X$ is called MMS-feasible for agent $i$ if $c_i(X) \leq 1$.
We may also simply say that agent $i$ likes bundle $X$ if it is MMS-feasible.
Given any partition $P = \{ P_1, \cdots P_k \}$ of items $M' \subseteq M$, we say $P$ is a $(\alpha, 1,\cdots, 1)$-partition for agent $i$ if there exists a bundle $P_j$ such that $c_i(P_j) \leq \alpha$ and $c_i(P_l) \leq 1$ for any $l \neq j$.

As a common observation of MMS, we have $c_i(M) \leq n\cdot \MMS_i = n$.\footnote{Otherwise, in any MMS partition $P$ of agent $i$ we have $c_i(P_j) < \frac{1}{n} \cdot c_i(M)$ for all $P_j \in P$, which leads to the contradiction that $\sum_{e\in M} c_i(e) < c_i(M)$.}
Furthermore, for any instance, we have $\MMS_i \geq c_i(e)$ for all $e\in M$ according to the definition of MMS.
Hence, we have that $c_i(e) \leq 1$ for all $i\in N$ and $e\in M$.

Next, we introduce the framework of all-but-one maximin share (AMMS).
We guarantee almost all agents ($n-1$ agents) an MMS allocation and the remaining one agent $\alpha$ times her MMS.

\begin{definition}[$\alpha$-AMMS]
    An allocation $\bX$ is $\alpha$-approximate all-but-one maximin share ($\alpha$-AMMS), if there exists an agent $i\in N$ with $c_i(X_i)\le \alpha\cdot \MMS_i$ and $c_j(X_j)\le  \MMS_j$ holds for every agent $j\in N\setminus \{i\}$.
\end{definition}



\begin{definition}[MMS-feasibility Graph]
    Given a partition $P= \{P_1,\dots,P_n\}$, we construct a bipartite graph $G = (N \cup P, E)$ in which we add an edge between agent $i \in N$ and bundle $P_j \in P$ if bundle $P_j$ is MMS-feasible for agent $i$, i.e., $E = \{(i, P_j ): i, j \in N, c_i(P_j ) \le 1\}$. 
    For any set of agents $S\subseteq N$, we use $L(S)$ to denote the set of neighbors of $S$.
\end{definition}

We remark that if $P$ is a partition of all items $M$, we have $|L(i)| \geq 1$ for every agent $i\in N$.
Otherwise, we have $\sum_{j=1}^n c_i(P_j)=c_i(M)>n$ which is a contradiction.
Furthermore, given an MMS-feasibility graph $G$, if there exists a perfect matching in $G$ then we can find an MMS allocation.
By Hall's Theorem, we have the following observation.

\begin{observation} \label{observation:perfectMatching}
    For any MMS-feasibility graph $G$, there is no perfect matching for $G$ if and only if there exists a subset $S \subseteq N$, such that $|S| > |L(S)|$. 
\end{observation}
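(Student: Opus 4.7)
The statement is the contrapositive form of Hall's marriage theorem specialized to the MMS-feasibility graph, so the plan is essentially to invoke Hall's theorem with a small sanity check that a ``perfect matching'' here is the same as a matching that saturates $N$. Because $P = \{P_1,\dots,P_n\}$ and $|N| = n$, the two sides of the bipartite graph $G = (N \cup P, E)$ have the same cardinality, so any matching saturating $N$ automatically saturates $P$ as well and hence is a perfect matching of $G$; conversely, a perfect matching obviously saturates $N$. This lets me translate ``perfect matching exists'' into ``a matching saturating $N$ exists.''

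Next I would state Hall's theorem in its standard form: in a bipartite graph, a matching saturating a given side $A$ exists if and only if every subset $S \subseteq A$ satisfies $|L(S)| \geq |S|$. Applying this with $A = N$ and taking the contrapositive yields exactly the desired equivalence: no perfect matching in $G$ exists if and only if some $S \subseteq N$ violates the Hall condition, i.e., $|S| > |L(S)|$.

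For the ``$\Leftarrow$'' direction, the observation that a violating $S$ forces no perfect matching is immediate, since any matching would have to injectively map the $|S|$ agents in $S$ into the $|L(S)| < |S|$ bundles adjacent to them, which is impossible by pigeonhole. For the ``$\Rightarrow$'' direction, the non-trivial content is Hall's theorem itself, which I would simply cite rather than reprove; the only paper-specific thing to double-check is the remark already made before the statement, namely that every agent $i$ has at least one MMS-feasible bundle in $P$ (so $|L(\{i\})| \geq 1$), which ensures the trivial Hall condition for singletons but plays no logical role in the equivalence being stated.

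There is no real obstacle: the claim is a direct application of a classical theorem, and the only thing worth being explicit about is the matching of cardinalities $|N| = |P| = n$ that makes ``saturating $N$'' and ``perfect matching'' synonymous here.
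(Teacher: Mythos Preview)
Your proposal is correct and matches the paper's approach: the paper simply prefaces the observation with ``By Hall's Theorem, we have the following observation'' and gives no further proof, so invoking Hall's marriage theorem (together with the check that $|N|=|P|=n$ so saturating $N$ is the same as a perfect matching) is exactly what is intended.
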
 

By Observation~\ref{observation:perfectMatching}, there must exist a subset $S\subseteq N$ such that $|S| > |L(S)|$, otherwise we can find a perfect matching that implies an MMS allocation.
We consider the maximum subset $S^*$ among them, i.e., $S^* = \argmax_{\{S\subseteq N: |S| > |L(S)|\}} |S|$. By definition, we have $N \setminus S^* \neq \emptyset$.

\begin{observation}\label{observation:perfect-matching-reduce}
    There exists a perfect matching between agents $N \setminus S^*$ and bundles $P \setminus L(S^*)$.
\end{observation}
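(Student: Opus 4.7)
The plan is to prove the observation by a Hall-type argument that leverages the \emph{maximality} of $S^*$. Concretely, I would restrict the MMS-feasibility graph $G$ to the bipartite subgraph $G'$ induced on $(N \setminus S^*) \cup (P \setminus L(S^*))$, and show that $G'$ satisfies Hall's condition for the side $N \setminus S^*$. Note that Hall's condition on this side is the right notion of ``perfect matching'' here, since although $|P \setminus L(S^*)| > |N \setminus S^*|$ (because $|S^*| > |L(S^*)|$), we only need every agent in $N \setminus S^*$ to be matched to a distinct bundle in $P \setminus L(S^*)$.

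The key step is to argue by contradiction: suppose there is some $T \subseteq N \setminus S^*$ whose neighborhood in $G'$, namely $L(T) \setminus L(S^*)$, satisfies $|L(T) \setminus L(S^*)| < |T|$. Then I would consider the enlarged set $S^* \cup T \subseteq N$ in the original graph $G$. Since $T$ and $S^*$ are disjoint and $T \neq \emptyset$, we have $|S^* \cup T| = |S^*| + |T|$, while the neighborhood satisfies
\begin{equation*}
L(S^* \cup T) \;=\; L(S^*) \cup L(T) \;=\; L(S^*) \cup \bigl(L(T) \setminus L(S^*)\bigr),
\end{equation*}
so $|L(S^* \cup T)| \le |L(S^*)| + |L(T) \setminus L(S^*)| < |L(S^*)| + |T| < |S^*| + |T| = |S^* \cup T|$, where the last inequality uses $|L(S^*)| < |S^*|$ from the definition of $S^*$.

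This yields a set $S^* \cup T$ that is strictly larger than $S^*$ and still satisfies $|S^* \cup T| > |L(S^* \cup T)|$, contradicting the maximality of $S^*$ in the definition $S^* = \argmax_{\{S \subseteq N \,:\, |S| > |L(S)|\}} |S|$. Hence Hall's condition holds on the $N \setminus S^*$ side of $G'$, and the desired perfect matching exists. I do not anticipate a genuine obstacle here; the only subtlety is making sure to phrase Hall's condition on the correct side (the smaller side $N \setminus S^*$), and to use the extremal choice of $S^*$ rather than some weaker property such as mere existence of a deficient set.
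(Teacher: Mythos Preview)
Your proposal is correct and follows essentially the same approach as the paper: both verify Hall's condition on the side $N\setminus S^*$ by contradiction, enlarging a hypothetical deficient set $T$ (the paper calls it $S'$) to $S^*\cup T$ and contradicting the maximality of $S^*$. Your version is in fact slightly more careful than the paper's, since you work explicitly with the restricted neighborhood $L(T)\setminus L(S^*)$ in the subgraph $G'$, whereas the paper phrases Hall's condition using the full-graph neighborhood $L(S')$.
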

\begin{proof}
    We show that for any subset $S' \subseteq N\setminus S^*$ we have $|S'| \leq |L(S')|$, which directly implies there exists a perfect matching between $N\setminus S^*$ and $P \setminus L(S^*)$, by Hall's Theorem.
    Assume otherwise that there exists a subset $S' \subseteq N\setminus S^*$ such that $|S'| > |L(S')|$.
    Then we have
    \begin{equation*}
        |L(S^*\cup S')| \leq |L(S^*)| + |L(S')| < |S^*| + |S'| = |S^*\cup S'|,
    \end{equation*}
    which is a contradiction towards that $S^*$ is the maximum subset with $|S^*| > |L(S^*)|$. See Figure~\ref{fig:example-mms-feasibility-graph} for an example.
\end{proof}

\vspace{-10pt}
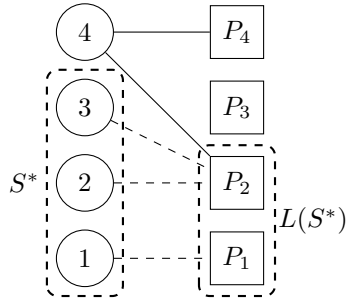
\begin{figure}[htbp]
    \centering
    \begin{tikzpicture}
    [round/.style={circle, draw, minimum size=7.5mm},
    square/.style={rectangle, draw, minimum size=7mm}]
    \foreach \x in {1, 2, 3, 4}
        \node (\x) at (0,\x) [round] {$\x$};
    \foreach \x in {1, 2, 3, 4}
        \node (P\x) at (2,\x) [square] {$P_\x$};
    \draw[dashed] (1) -- (P1);
    \draw[dashed] (2) -- (P2);
    \draw[dashed] (3) -- (P2);
    \draw (4) -- (P2);
    \draw (4) -- (P4);
    \draw[dashed, black, thick, rounded corners] (-0.5, 0.5) rectangle (0.5, 3.5);
    \draw[dashed, black, thick, rounded corners] (1.5, 0.5) rectangle (2.5, 2.5);
    \node at (-0.8, 2.0) {$S^*$};
    \node at (3, 1.5) {$L(S^*)$};
    \end{tikzpicture}
    \caption{An example of an MMS-feasibility graph with $n=4$ and  there exists a perfect matching between agent $\{4\}$ and bundles $\{P_3,P_4\}$.}
    \label{fig:example-mms-feasibility-graph}
\end{figure}


Based on the MMS-feasibility graph, we introduce valid reduction and reduced instances.
Given any instance $\cI = (M, N, \bc)$, a reduction is a procedure that allocates a set of bundles to a set of agents and outputs a new instance $\cI' = (M', N', \bc')$, where $M'$ and $N'$ are the unallocated items and agents that received nothing respectively.
Throughout this paper, each time we apply a reduction on an instance, we maintain that the reduction is ``valid'': for any agent $i\in N'$, we guarantee his/her proportional share does not increase, i.e., $c_i(M') \leq  |N'|$ for all $i\in N'$.

\begin{definition} [Reduced Instance]
    Given an $\cI = (M, N, \bc)$, a reduction is called a valid reduction if it outputs a new instance $\cI' = (M', N', \bc')$ such that $M' \subseteq M, N' \subseteq N$ and $c_i(M')\leq |N'|$ holds for all $i\in N'$. 
    We call such an instance $\cI'$ a reduced instance.
\end{definition}

\section{\texorpdfstring{$\frac{9}{8}$}{}-AMMS for Three Agents}

In this section, we consider instances with $n=3$ agents.
Given an instance $\cI = (M, N, \bc)$ with $n=3$, we let $P = \{P_1, P_2, P_3\}$ be an MMS partition for agent $3$.
We construct an MMS-feasibility graph $G = (N \cup P, E)$.
Note that we only have to consider the case that there is no perfect matching between $N$ and $P$.
Otherwise, there exists an MMS allocation.
Recall that $P$ is an MMS partition for agent $3$, all bundles are MMS-feasible for agent $3$, i.e., $L(\{3\}) = \{P_1, P_2, P_3\}$.
By Observation~\ref{observation:perfectMatching}, for agents $1,2$ we have $|L(\{1,2\})| = 1$.


When there is no perfect matching in the MMS-feasibility graph, we show that there exists a $\frac{9}{8}$-MMS allocation. 
Before introducing this main result, we first introduce the following technical lemma.

\begin{lemma} \label{lemma: MMS-valid-reduction}
    Given an allocation instance $(M,N,\bc)$, for any agent $i\in N$, if there exist two items $e_1, e_2$ with $c_i(e_1+e_2) \geq 1$, then $\MMS_i(M\setminus\{e_1,e_2\}, n - 1)\le \MMS_i(M, n)$.
\end{lemma}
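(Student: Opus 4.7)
The plan is to start from an MMS partition of $M$ for agent $i$, namely $P = (P_1, \ldots, P_n)$ with $c_i(P_j) \le \MMS_i(M,n)$ for every $j$, and use it to construct an $(n-1)$-partition of $M \setminus \{e_1, e_2\}$ in which every bundle still has cost at most $\MMS_i(M,n)$. Since the definition of $\MMS_i(M\setminus\{e_1,e_2\}, n-1)$ is an infimum over such $(n-1)$-partitions, producing one will immediately yield the claimed inequality.

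The construction splits into two cases according to where $e_1$ and $e_2$ lie in $P$. In the first case, both items belong to the same bundle $P_j$. Since $c_i(P_j) \le \MMS_i(M,n)$ and $c_i(e_1+e_2) \ge 1 \ge \MMS_i(M,n)$ (recall the paper's normalization $\MMS_i = 1$), the leftover $P_j \setminus \{e_1, e_2\}$ has cost at most $0$, and these items can be appended to any of the other $n-1$ bundles without increasing their cost. In the second case, $e_1 \in P_j$ and $e_2 \in P_k$ with $j \ne k$, and the natural move is to merge $P_j - e_1$ with $P_k - e_2$ into a single bundle. Its cost is
\begin{equation*}
    c_i(P_j) + c_i(P_k) - c_i(e_1+e_2) \;\le\; 2\MMS_i(M,n) - 1 \;\le\; \MMS_i(M,n),
\end{equation*}
so the resulting $(n-1)$-partition of $M \setminus \{e_1,e_2\}$, together with the $n-2$ untouched bundles, witnesses the desired bound.

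I do not expect any genuine obstacle: the only subtlety is making sure both cases really use the full strength of $c_i(e_1+e_2)\ge 1$ and the normalization $\MMS_i = 1$, which is exactly what makes the merge in Case~2 and the zero leftover in Case~1 work out. Everything else is bookkeeping about rearranging a single MMS partition.
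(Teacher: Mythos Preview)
Your proposal is correct and follows essentially the same approach as the paper: start from an MMS partition for agent $i$, split into the same-bundle and different-bundle cases, and in the latter merge the two residual parts using $c_i(P_j)+c_i(P_k)-c_i(e_1+e_2)\le 1$. Your Case~1 is in fact slightly cleaner than the paper's, which assumes the common bundle equals $\{e_1,e_2\}$ exactly rather than allowing for leftover zero-cost items as you do.
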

\begin{proof}
     Recall that for any agent $i$, we have $\MMS_i(M,n) = 1$.
     Given an MMS partition $P = \{P_1,\dots,P_n\}$ for agent $i$, we have $c_i(P_j) \leq 1$ for all $P_j\in P$.
     Hence for items $e_1, e_2$ such that $c_i(e_1 + e_2) \geq 1$, either there exists a bundle $P^* \in P$ such that $P^* = \{e_1, e_2\}$ or there exist two bundles $P_1, P_2 \in P$ such that $e_1 \in P_1, e_2 \in P_2$ and $P_1 \neq P_2$.
     For the case that $P^* = \{e_1, e_2\}$, we have $P \setminus \{P^*\}$ is a partition of items $M\setminus \{e_1, e_2\}$ with $c_i(P_j) \leq 1 $ for all $P_j\in P \setminus \{P^*\}$.
     For the case that $e_1 \in P_1, e_2\in P_2$ such that $P_1 \neq P_2$, we have $c_i(P_1\cup P_2) - c_i(e_1 + e_2) \leq 1$.
     Hence $P' = (P_1\cup P_2\setminus\{e_1, e_2\}, P_3, P_4, \ldots, P_n)$ is a partition of items $M\setminus \{e_1, e_2\}$ while $c_i(P_j) \leq 1$ for all $P_j\in P'$. Recall that $\MMS_i$ is the minimum cost agent $i$ can guarantee if she divides all the items into $n$ bundles and picks a bundle at last, thus we have $\MMS_i(M\setminus\{e_1,e_2\}, n - 1)\le 1$.
\end{proof}

\begin{theorem}\label{thm:three-agents}
    For any instance with three agents, there exist $9/8$-AMMS allocations.
\end{theorem}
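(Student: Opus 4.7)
The plan is to work with an MMS partition for agent $3$ and read off the structural obstruction from the MMS-feasibility graph. Let $P=\{P_1,P_2,P_3\}$ be an MMS partition for agent $3$, so $c_3(P_j)\leq 1$ for every $j$, and build the feasibility graph $G=(N\cup P,E)$. If $G$ has a perfect matching, we already obtain an MMS allocation, which is a fortiori $9/8$-AMMS. Otherwise, by Observation~\ref{observation:perfectMatching} and the fact that agent $3$ likes all three of her own bundles, the only possible violating subset is $\{1,2\}$, giving $|L(\{1,2\})|=1$. Relabeling if necessary, assume $L(\{1,2\})=\{P_1\}$, so $c_i(P_j)>1$ for $i\in\{1,2\}$ and $j\in\{2,3\}$.

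Next I would split the analysis into an \emph{easy case} and a \emph{hard case}. In the easy case at least one of the four values $c_1(P_2),c_1(P_3),c_2(P_2),c_2(P_3)$ is at most $9/8$; for example if $c_2(P_2)\leq 9/8$, then $X_1=P_1$, $X_2=P_2$, $X_3=P_3$ gives agents $1$ and $3$ their MMS and agent $2$ cost at most $9/8\cdot\MMS_2$, which is $9/8$-AMMS. The remaining three sub-cases follow after swapping $P_2\leftrightarrow P_3$ and/or the roles of agents $1$ and $2$.

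In the hard case all four costs strictly exceed $9/8$, so combined with the normalization bound $c_i(M)\leq 3$ we obtain $c_1(P_1),c_2(P_1)<3/4$; the bundle $P_1$ is very light for both of the non-favoured agents. Suppose in addition that $|P_2|=2$. Then Lemma~\ref{lemma: MMS-valid-reduction} with $\{e_1,e_2\}=P_2$ (whose total cost exceeds $9/8>1$ for each $i\in\{1,2\}$) yields $\MMS_i(P_1\cup P_3,2)\leq 1$. Handing $P_2$ to agent $3$ and running divide-and-choose on $P_1\cup P_3$ with agent $1$ as the divider produces halves $(A,B)$ with $\max(c_1(A),c_1(B))\leq 1$; agent $2$ then picks her preferred half, whose $c_2$-cost is at most $\tfrac12 c_2(P_1\cup P_3)\leq \tfrac12(3-c_2(P_2))<15/16$. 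Both non-agent-$3$ agents thus receive cost $\leq 1$, which is even an MMS allocation. The case $|P_3|=2$ is handled symmetrically.

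The hard part will be the remaining sub-case $|P_2|,|P_3|\geq 3$, where Lemma~\ref{lemma: MMS-valid-reduction} does not immediately give $\MMS_i(P_1\cup P_3,2)\leq 1$, since it is possible that no pair of items inside $P_2$ (or $P_3$) sums to $1$ for either agent. Here I would switch to a load-balancing procedure initialised at the allocation (agent $3\mapsto P_2$, agent $1\mapsto P_1$, agent $2\mapsto P_3$), in which only agent $2$ violates her threshold with cost $c_2(P_3)>9/8$. Using the slack $c_1(P_1)<3/4$, items are transferred from $P_3$ into agent $1$'s bundle in a ratio-based order (e.g.\ sorted by $c_2(e)/c_1(e)$) until agent $2$'s remaining share drops below $9/8$; the item-size bound $c_i(e)\leq 1$ and the global bound $c_2(P_1\cup P_3)<15/8$ then have to be combined carefully to show that agent $1$'s total cost never exceeds $\MMS_1=1$ at the stopping step. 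The matching tight instance announced in the paper would verify that the constant $9/8$ cannot be improved.
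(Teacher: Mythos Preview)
Your setup, the easy case, and the $|P_j|=2$ sub-case are all correct and agree with the paper. The gap is in your final sub-case $|P_2|,|P_3|\geq 3$, where two things go wrong.

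First, the dichotomy you use is weaker than what is actually needed. The right split is not $|P_j|=2$ versus $|P_j|\geq 3$, but rather: does some $P_j$ with $j\in\{2,3\}$ contain a \emph{pair} of items $e_1,e_2$ with $c_i(e_1+e_2)\geq 1$ for some $i\in\{1,2\}$? If so, Lemma~\ref{lemma: MMS-valid-reduction} applies with that pair (not the whole bundle), giving $\MMS_i(M\setminus P_j,2)\leq 1$, and divide-and-choose on $M\setminus P_j$ again yields an MMS allocation exactly as in your $|P_j|=2$ argument. Your $|P_j|=2$ case is just the special instance where the pair happens to exhaust $P_j$.

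Second, in the complementary case (no such pair exists in $P_2$ or $P_3$), the ratio-based transfer you sketch is not proved and does not obviously go through; you say as much. The obstruction is real: with $c_1(P_1)$ close to $3/4$ there is almost no slack in agent~$1$'s bundle, and nothing you have assumed forces any single item of $P_3$ to have $c_1$-cost small enough to fit, so the very first transfer may already push agent~$1$ past $1$. The paper takes a different route: give agent~$3$ the bundle among $\{P_2,P_3\}$ that agent~$1$ dislikes most (so the remaining items $M'$ satisfy $c_1(M')<15/8$), run load-balancing on $M'$ with respect to $c_1$, and let agent~$2$ choose first. The technical heart is then to show that the heavier load-balanced bundle $B_1$ has $c_1(B_1)\leq 9/8$; this is exactly where the structural hypothesis ``no two items of $P_2$ or $P_3$ sum to $\geq 1$ for agent~$1$'' is used, via a pigeonhole argument ruling out a bad three-item configuration. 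So what is missing from your proposal is both the correct case split and the load-balancing analysis that the ``no large pair'' hypothesis makes possible.
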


Before we prove Theorem~\ref{thm:three-agents}, we first introduce the \emph{divide-and-choose} and \emph{load-balancing} procedures that we will use in the analysis.
The \emph{divide-and-choose} procedure is described as follows.
Given a set of items $M'$, a divider $i$, and a chooser $j$, we let agent $i$ divide items $M'$ into two bundles $P_1, P_2$ and agent $j$ choose first.

\begin{observation} \label{observation:divide-and-choose}
    The divider receives a bundle with cost at most $\MMS_i (M',2)$ and the chooser receives a bundle with cost at most $\frac{1}{2} \cdot c_j(M')$.
\end{observation}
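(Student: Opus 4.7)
The plan is to verify the two claims separately, since each follows almost directly from the definitions once the dividing strategy of agent $i$ and the choosing strategy of agent $j$ are specified.

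For the divider's guarantee, I would first fix the dividing strategy: agent $i$ chooses a 2-partition $P = \{P_1, P_2\}$ of $M'$ that attains the minimum in the definition of $\MMS_i(M',2)$, i.e., $\max\{c_i(P_1), c_i(P_2)\} = \MMS_i(M',2)$. By this choice, both bundles satisfy $c_i(P_k) \le \MMS_i(M',2)$ for $k \in \{1,2\}$. No matter which of the two bundles the chooser takes, the remaining bundle is received by the divider, so $c_i(X_i) \le \MMS_i(M', 2)$, as claimed.

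For the chooser's guarantee, I would invoke the chooser's strategy: since items are chores, agent $j$ picks the bundle of smaller cost under $c_j$, i.e., $X_j \in \argmin_{k \in \{1,2\}} c_j(P_k)$. Then
\begin{equation*}
c_j(X_j) = \min\{c_j(P_1), c_j(P_2)\} \le \frac{c_j(P_1) + c_j(P_2)}{2} = \frac{c_j(M')}{2},
\end{equation*}
where the last equality uses additivity and the fact that $\{P_1, P_2\}$ is a partition of $M'$.

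Since both statements are proved by a one-step argument, there is no real obstacle; the only subtlety worth stating explicitly is that the divider must be assumed to play her optimal MMS 2-partition (otherwise the first claim fails), while the chooser's bound holds regardless of how the divider splits, as long as the chooser picks the minimum-cost bundle. I would present the two halves as two short paragraphs joined by ``Combining the two bounds finishes the proof.''
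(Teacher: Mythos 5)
Your proof is correct and is exactly the intended argument: the paper states this observation without proof, and your two one-line bounds (the divider plays an optimal MMS $2$-partition so the leftover bundle costs at most $\MMS_i(M',2)$; the chooser takes the cheaper bundle, which by averaging costs at most $\frac{1}{2}c_j(M')$) are the standard justification. Your explicit remark that the divider must be assumed to use her optimal partition is a worthwhile clarification, since Algorithm Divide-and-Choose only says ``let agent $i$ divide the items'' without specifying the split.
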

\begin{algorithm}
    \caption{Divide-and-Choose($M', i, j$)}
    \KwIn{A set of items $M'$, a divider $i$ and a chooser $j$.}
    
    Let agent $i$ divide all items into two bundles $P_1, P_2$ \;
    $X_j \gets \argmin_{B\in \{ P_1, P_2 \}}c_j(B)$\;
    $X_i \gets M'\setminus X_j$\;
    \KwOut{An allocation $\bX = (X_i, X_j)$.}
\end{algorithm}

Another well-studied procedure is called \emph{load-balancing}.
Given a set of items $M'$, an integer $n'$, and a cost function $c_i$, the algorithm follows the decreasing order of items and assigns each item to the bundle with minimum cost.

\begin{algorithm}[htbp]
    \caption{Load-Balancing($M', n', c_i$)}
    \label{alg:LBA}
    \KwIn{Item set $M'$, partition size $n'$, and cost function $c_i$ with $c_i(e_1) \geq c_i(e_2) \geq \cdots \geq c_i(e_{|M'|})$.}
    
    For all $k\in [n']$, let $P_k \gets \emptyset$ \;
    \For{$j = 1,2,\dots, |M'|$}{
        Let $k^* \gets \argmin_{k\in [n']} c_i(P_k)$\;
        Update $P_{k^*} \gets P_{k^*} + e_j$\;
    }
    \KwOut{A partition $P = \{P_1,\dots,P_{n'}\}$.}
\end{algorithm}

\begin{lemma}
    Given a partition $P$ returned by \emph{load-balancing}, if $c_i(M')\le n'$, for any bundle $P_j\in P$, for all item $e\in P_j$ we have $c_i(P_j - e) < 1$.
\end{lemma}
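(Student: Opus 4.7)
The plan is to reduce to a single ``worst'' item $e^*$ in each bundle, invoke the greedy property of load-balancing at the step when $e^*$ was placed, and then sum the resulting lower bounds over all bundles. First, I would observe that since the algorithm processes items in non-increasing order of cost, the last item placed into a bundle $P_j$ is exactly the smallest-cost item of $P_j$; call it $e^*$. For every other $e\in P_j$, the inequality $c_i(e)\ge c_i(e^*)$ gives $c_i(P_j-e)\le c_i(P_j-e^*)$, so it suffices to show $c_i(P_j-e^*)<1$.

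Next, I would use the defining property of load-balancing at the step where $e^*$ was inserted: the partial bundle $P_j-e^*$ had the minimum cost among all $n'$ partial bundles at that moment. Hence every other partial bundle at that time had cost at least $c_i(P_j-e^*)$, and since load-balancing never removes items, the same lower bound persists for every final bundle $P_k$ with $k\ne j$. Combining this with the decomposition $c_i(P_j)=c_i(P_j-e^*)+c_i(e^*)$ and summing over all $n'$ bundles yields
\[
c_i(M') \;=\; c_i(P_j-e^*) + c_i(e^*) + \sum_{k\ne j} c_i(P_k) \;\ge\; n'\cdot c_i(P_j-e^*) + c_i(e^*).
\]
Together with the hypothesis $c_i(M')\le n'$, this gives $c_i(P_j-e^*)\le 1-c_i(e^*)/n'$.

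The main obstacle is upgrading the bound from $\le 1$ to the \emph{strict} inequality $<1$ demanded by the statement: the summation alone only delivers $\le 1$, and one genuinely needs the positive contribution of $c_i(e^*)$ itself to drop below $1$. This is handled by the standard assumption that every chore has strictly positive cost (any zero-cost item could be appended to an arbitrary bundle at the end without affecting any inequality in the analysis, so we may assume $c_i(e^*)>0$ without loss of generality). Under this assumption, $1-c_i(e^*)/n'<1$, which establishes $c_i(P_j-e^*)<1$, and the reduction from the first paragraph then propagates the bound to every $e\in P_j$.
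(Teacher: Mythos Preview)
Your argument is correct and follows essentially the same route as the paper: reduce to the last (smallest-cost) item $e^*$ placed in $P_j$, use that $P_j-e^*$ was the minimum-cost partial bundle at the moment of insertion so every other bundle has final cost at least $c_i(P_j-e^*)$, and sum over all bundles against the hypothesis $c_i(M')\le n'$. The paper phrases this as a proof by contradiction (assuming $c_i(P_j-e)\ge 1$ and deriving $c_i(M')>n'$) while you give the direct quantitative bound $c_i(P_j-e^*)\le 1-c_i(e^*)/n'$; you are also more careful than the paper in flagging that the strict inequality hinges on $c_i(e^*)>0$, which the paper's proof silently assumes.
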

\begin{proof}
    Note that the algorithm allocates items with decreasing cost.
    To prove the lemma, it remains to consider the last item $e$ that is allocated to bundle $P_j$.
    Assume, for the sake of contradiction, that $c_i(P_j - e) \ge 1$.
    Following the algorithm, we must have $c_i(P_l) \ge 1$ for any other bundle $P_l \neq P_j$.
    Otherwise, the algorithm would not allocate item $e$ to bundle $P_j$.
    By summing the cost of all bundles we have $c_i(M') > n'$, which is a contradiction.
\end{proof}


Now we are ready to prove the existence of $\frac{9}{8}$-AMMS allocations.
We remark that we have an MMS-feasibility graph $G = (N \cup P, E)$ where $P = \{P_1,P_2,P_3\}$ is an MMS partition for agent $3$.
Moreover, we only consider the case that there is no perfect matching between $N$ and $P$, i.e., $|L(\{1,2\})| = 1$.
We assume w.l.o.g. that bundle $P_1$ is MMS-feasible for agents $1,2$.
We consider whether there exist two items $e_1, e_2$ in one of the bundles of $\{P_2, P_3\}$ and an agent $i\in \{1,2\}$ such that $c_i(e_1 + e_2) \geq 1$.
If there exist such two items $e_1, e_2$, then by Lemma~\ref{lemma: MMS-valid-reduction} we have $\MMS_i (M\setminus\{e_1,e_2\}, 2) \leq \MMS_i (M,3)$.
We show that by using such a reduction, we find an MMS allocation by calling \emph{divide-and-choose} procedure.
If there do not exist such two items, we show that by calling the well-studied \emph{load-balancing} procedure, we find a $\frac{9}{8}$-AMMS allocation. 
The following lemma considers the former case that there exists such two items $e_1, e_2$.
\begin{lemma} \label{lemma:MMS-reduction-3agents}
    If there exist two items $e_1,e_2$ in some bundle $P_j \in \{P_2, P_3\}$ with $c_i(e_1+e_2) \geq 1$ for some agent $i \in \{ 1, 2 \}$, there exists an MMS allocation.
\end{lemma}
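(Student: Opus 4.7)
\emph{Plan.} The idea is to peel off the entire bundle containing the expensive pair for agent~3, who tolerates it because $\{P_1,P_2,P_3\}$ is her MMS partition, and then run divide-and-choose between agents $1$ and $2$ on the remainder, using the reduction from Lemma~\ref{lemma: MMS-valid-reduction} to control the divider and using the structure $L(\{1,2\})=\{P_1\}$ to control the chooser. Without loss of generality I would assume $e_1,e_2\in P_2$ and $c_1(e_1+e_2)\ge 1$; the other choices of $j\in\{2,3\}$ and $i\in\{1,2\}$ are handled by relabeling. I would first allocate $P_2$ to agent~$3$, which costs her at most $c_3(P_2)\le \MMS_3=1$ since $P$ is an MMS partition for her, and then invoke Divide-and-Choose$(M\setminus P_2,\,1,\,2)$.

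For agent~$1$ (the divider), Lemma~\ref{lemma: MMS-valid-reduction} applied to the pair $e_1,e_2$ gives $\MMS_1(M\setminus\{e_1,e_2\},2)\le \MMS_1(M,3)=1$. Since $M\setminus P_2\subseteq M\setminus\{e_1,e_2\}$, the trivial subset monotonicity of MMS (restricting any $2$-partition of the larger set to the smaller one cannot increase any bundle's cost) yields $\MMS_1(M\setminus P_2,2)\le 1$, so Observation~\ref{observation:divide-and-choose} implies agent~$1$ ends up with cost at most $1$. For agent~$2$ (the chooser), the guarantee is $\tfrac12\, c_2(M\setminus P_2)$, and here the key fact is that $P_2\notin L(\{1,2\})$, so $c_2(P_2)>1$. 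Combining with the universal bound $c_2(M)\le n=3$, I get
\[
c_2(M\setminus P_2)\;=\;c_2(M)-c_2(P_2)\;<\;3-1\;=\;2,
\]
so agent~$2$ receives a bundle of cost strictly less than $1$. All three agents then meet their MMS thresholds, giving the desired MMS allocation.

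The only subtle point is checking that the divider bound really passes through the inclusion $M\setminus P_2\subseteq M\setminus\{e_1,e_2\}$; beyond that, everything is bookkeeping over the symmetric cases. I do not expect a genuine obstacle: the non-trivial work has already been done by Lemma~\ref{lemma: MMS-valid-reduction}, and the chooser's bound is essentially forced by the characterization $L(\{1,2\})=\{P_1\}$ that isolates which bundle must go to agent~$3$.
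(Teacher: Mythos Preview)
Your proposal is correct and follows essentially the same approach as the paper: assign the bundle containing the heavy pair to agent~$3$, then run divide-and-choose on the remainder with agent~$i$ as divider (controlled via Lemma~\ref{lemma: MMS-valid-reduction} and subset monotonicity of $\MMS$) and the other agent as chooser (controlled via $c_j(P_j)>1$ from $L(\{1,2\})=\{P_1\}$). The paper's version differs only cosmetically, writing the chain $\MMS_1(M\setminus P_j,2)\le \MMS_1(M\setminus\{e_1,e_2\},2)\le \MMS_1(M,3)$ directly without spelling out the monotonicity step.
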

\begin{proof}
    Assume w.l.o.g. that there exist two items $e_1,e_2$ in bundle $P_3$ with $c_1(e_1+e_2) \geq 1$.
    Recall that for agent 1, only bundle $P_1$ is MMS-feasible.
    By Lemma~\ref{lemma: MMS-valid-reduction}, we have
    \begin{equation*}
        \MMS_1(M\setminus P_3, 2)\le \MMS_1(M\setminus\{e_1,e_2\}, 2)\le \MMS_1(M, 3).
    \end{equation*}
    Furthermore, we have $c_2(M \setminus P_3) \leq 2$.
    We call the \emph{divide-and-choose} procedure on items $M\setminus P_3$ with agent $1$ being the divider and agent $2$ being the chooser.
    By Observation~\ref{observation:divide-and-choose}, it returns an allocation $(X_1, X_2)$ such that $c_1(X_1) \leq \MMS_1(M\setminus P_3, 2) \leq \MMS_1(M, 3)$ and $c_2(X_2) \leq \frac{1}{2} \cdot c_2(M\setminus P_3) \leq 1$.
    Recall that $P$ is an MMS partition for agent $3$.
    Then allocating bundle $P_3$ to agent $3$ leads to an MMS allocation.
\end{proof}

In the following, we consider the case that for any two items $e_1, e_2$ from the same bundle in $\{P_2, P_3\}$, we have $c_i(e_1+e_2) < 1$ for every $i\in \{1,2\}$.

\begin{lemma} \label{lemma:load-balancing-3agents}
    If for any two items $e_1,e_2$ in $P_2$ or $P_3$, we have $c_i(e_1+e_2) < 1$ for every $i \in \{ 1, 2 \}$, there exists a $\frac{9}{8}$-AMMS allocation.
\end{lemma}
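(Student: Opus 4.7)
The plan is to prove Lemma~\ref{lemma:load-balancing-3agents} by splitting into two cases, based on how close $c_i(P_j)$ is to $9/8$ for $i \in \{1,2\}$ and $j \in \{2,3\}$.

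In the \emph{easy case}, there exist $i \in \{1,2\}$ and $j \in \{2,3\}$ with $c_i(P_j) \leq 9/8$. I would directly assign $P_j$ to agent $i$ (cost $\leq 9/8 \cdot \MMS_i$), $P_{5-j}$ to agent $3$ (MMS-feasible since it lies in her MMS partition), and $P_1$ to the remaining agent in $\{1,2\}$ (MMS-feasible since $P_1$ is the unique MMS-feasible bundle for agents $1$ and $2$). This directly yields a $9/8$-AMMS allocation.

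In the \emph{hard case}, $c_i(P_j) > 9/8$ for every $i \in \{1,2\}$ and $j \in \{2,3\}$, which (summing and using $c_i(M) \leq 3$) forces $c_i(P_1) < 3/4$. My plan here is to give $P_3$ (say) to agent $3$ and partition $M'' = P_1 \cup P_2$ between agents $1$ and $2$. Since $c_i(P_3) > 1$ gives $c_i(M'') < 2$, Algorithm~\ref{alg:LBA} run on $M''$ with $n'=2$ and $c_1$ produces a partition $(B_1, B_2)$ in which $c_1(B_k - e) < 1$ for each $B_k$ and its smallest item $e$. Then divide-and-choose (with agent $1$ as divider) assigns agent $2$ her preferred bundle with $c_2 \leq c_2(M'')/2 < 1$ by Observation~\ref{observation:divide-and-choose}, so agent $2$ is MMS-feasible. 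Agent $1$ takes the remaining bundle, and the LPT invariant $c_1(B_2) - c_1(B_1) \leq c_1(e^*)$ combined with $c_1(B_1) + c_1(B_2) = c_1(M'')$ gives $c_1(B_2) \leq c_1(M'')/2 + c_1(e^*)/2$. Using the pairwise-sum hypothesis (the two largest items of $P_2$ for agent $1$ sum to $<1$, so the second-largest is $<1/2$ and subsequent items are at most this) together with the Case-B bound $c_1(P_1) < 3/4$, I would argue $c_1(B_2) \leq 9/8$. The symmetric choice of giving $P_2$ instead of $P_3$ to agent $3$ and running load-balancing on $P_1 \cup P_3$ handles any residual subcases, and the analysis for agent $2$ as divider is analogous.

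The main obstacle is establishing the $9/8$ bound on $c_1(B_2)$ in the hard case. The pairwise-sum hypothesis constrains individual items but does not immediately bound $c_1(e^*)$, so a detailed accounting of the LPT schedule is needed---especially tracking items from $P_1$ (which may be individually large but total $<3/4$) relative to items from $P_2$ (whose second-largest is $<1/2$). In the tight subcases, an item-swap refinement together with switching between the two symmetric strategies (giving $P_2$ versus $P_3$ to agent $3$) is required to close the argument and match the $9/8$ ratio.
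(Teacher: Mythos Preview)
Your overall plan coincides with the paper's: split into an easy case (some $c_i(P_j)\le 9/8$ allows a direct assignment) and a hard case in which agent~$3$ takes one of $P_2,P_3$ and the remaining two bundles are repartitioned between agents~$1$ and~$2$ via load-balancing with respect to $c_1$, after which agent~$2$ chooses first. Your easy case and the handling of agent~$2$ in the hard case are fine.

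The genuine gap is the step you yourself flag: bounding the larger load-balanced bundle by $9/8$ for agent~$1$. Your inequality $c_1(B_{\max})\le c_1(M'')/2 + c_1(e^*)/2$ is correct, but it does not close by itself, and there is no direct bound $c_1(e^*)\le 3/8$: the last item placed in the larger bundle can exceed $3/8$ (take e.g.\ $P_1=\{0.6\}$, $P_2=\{0.49,0.49,0.17\}$). The paper does \emph{not} bound $c_1(e^*)$ at all, and it uses neither an item-swap nor the ``switch to $P_1\cup P_3$'' symmetry you propose. Instead, it argues by contradiction from $c_1(B_{\max})>9/8$ to $c_1(M'')\ge 15/8$, via a lower bound on the \emph{smaller} side: if $e$ is the last item placed in $B_{\max}$ with $x=c_1(e)$ and $B'_2$ is the part of the smaller bundle placed before $e$, then $c_1(B'_2)\ge 2x$. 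Combined with the LPT inequality $c_1(B'_2)\ge c_1(B_{\max}-e)>9/8-x$, this yields $c_1(B'_2)>3/4$ and hence $c_1(M'')>9/8+3/4=15/8$, contradicting $c_1(P_3)>9/8$.

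The only nontrivial case of the claim $c_1(B'_2)\ge 2x$ is $|B_{\max}|=2$, and this is exactly where the pairwise-sum hypothesis is used, not in the way you suggest (bounding the second-largest $P_2$-item by $1/2$), but via a pigeonhole argument: if $B_{\max}=\{e',e\}$ with $c_1(e'+e)>1$ and $|B'_2|=1$ with item $e''$, then $c_1(e'')\ge c_1(e')\ge c_1(e)$ forces every pair among $\{e,e',e''\}$ to have $c_1$-sum $>1$; two of these three items lie in the same $P_j\in\{P_1,P_2\}$, contradicting either $c_1(P_1)\le 1$ or the hypothesis on $P_2$. Your bounds ``second-largest of $P_2$ is $<1/2$'' and ``$c_1(P_1)<3/4$'' are true but are not what drives the argument; the missing idea is this three-item pigeonhole step.
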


\begin{proof}
Recall that for agents $1$ and $2$, only bundle $P_1$ is MMS-feasible for them.
We assume w.l.o.g. that $c_1(P_1) < c_1(P_2) \leq c_1(P_3)$. 
Note that when $c_1(P_2) \leq \frac{9}{8}$, we can get a $\frac{9}{8}$-AMMS allocation by allocating $P_2$ to agent $1$, $P_1$ to agent $2$ and $P_3$ to agent $3$.

Hence it remains to consider the case that $c_1(P_2) > \frac{9}{8}$, which implies that $c_1(P_3) \geq c_1(P_2) > \frac{9}{8}$. 
Let $M'=P_1\cup P_2$, we have $c_1(M') = c_1(M\setminus P_3) < \frac{15}{8}$. 
We call the \emph{load-balancing} procedure on items $M'$ and cost function $c_1$ to get a partition $\{B_1,B_2\}$.
We assume w.l.o.g. that $c_1(B_1) \geq c_1(B_2)$.
Again, if $c_1(B_1) \leq 9/8$, then we are done by letting agent $2$ pick the bundle she prefers in $\{B_1, B_2\}$, agent $3$ receive $P_3$ and agent $1$ receive the remaining items.

In the following, we assume that $c_1(B_1) > 9/8$ and show a contradiction.
We denote by $e$ the last item that is allocated to bundle $B_1$ (during \emph{load-balancing}) and let $x = c_1(e)$.
We use $B'_2 \subseteq B_2$ to denote the set of items that are allocated before $e$.
We have that $c_i(e') \geq c_i(e)$ for all $e'\in B_2$ following the \emph{load-balancing} procedure.

\begin{claim} \label{claim:B'_2-at-least-2x}
    We have $c_1(B'_2)\geq 2x$.
\end{claim}
\begin{proof}
    We consider two cases according to the size of $B_1$.
    When $|B_1| \geq 3$, we have $c_1(B_1 - e) \geq 2x$ since there exist at least two items in $B_1$ that are allocated before $e$.
    Following the \emph{load-balancing} procedure, we have $ c_1(B'_2) \geq c_1(B_1-e) \geq 2x$.

    Next, we consider the case that $|B_1| < 3$.
    We must have $|B_1| = 2$.
    Otherwise we have $|B_1| = 1$ and $c_1(e) = c_1(B_1) > 9/8$, which contradicts that $c_i(e) \leq 1$.
    Let $e'$ be the other item in $B_1$, we have $c_1(e + e') \geq 1$.
    We show that $|B'_2| \geq 2$, which directly implies that $c_1(B'_2) \geq 2x$ since all items in $B'_2$ are allocated before $e$.
    Suppose that $|B'_2| = 1$ and $e''$ is the item in $B'_2$.
    Then we have $c_1(e'') \geq c_1(e')$ since the algorithm allocates item $e$ to bundle $B_1$.
    Recall that we have $c_1(e + e') \geq 1$, which leads to that any two items in $\{e, e', e''\}$ have a total cost larger than $1$ for agent $1$.
    According to the Pigeonhole Principle, at least one of the bundles $\{P_1, P_2\}$ will contain two items in $\{e, e', e''\}$. Recall that we have $c_1(P_1)\le 1$, thus bundle $P_2$ must contain two of items in $\{e, e', e''\}$ which contradicts the assumption that for any two items $e_1, e_2$ from the same bundle in ${P_2, P_3}$, we have $c_i(e_1 + e_2) < 1$ for both $i \in \{1, 2\}$.
\end{proof}

According to the \emph{load-balancing} procedure, we have
\begin{equation*}
    c_1(B'_2) \geq c_1(B_1 - e) > \frac{9}{8} - x.
\end{equation*}
By Claim~\ref{claim:B'_2-at-least-2x}, we have $c_1(B'_2) \geq 2x$.
Combining the two bounds we have $3 \cdot c_1(B'_2) > \frac{18}{8}$, implying $c_1(B'_2) > \frac{3}{4}$.
Thus we have
\begin{equation*}
    c_1(M') = c_1(B_1) + c_1(B_2) > \frac{9}{8} + \frac{3}{4} = \frac{15}{8},
\end{equation*}
which contradicts that $c_1(M') = c_1(M\setminus P_3) < \frac{15}{8}$.
  
Combining the above analysis, there exists a $\frac{9}{8}$-AMMS allocation for three agents.
\end{proof}

For completeness, we summarize our complete algorithm in Algorithm~\ref{alg:three-agents}.
We further argue that our analysis is tight, by providing an example where Algorithm~\ref{alg:three-agents} returns an allocation such that $c_1(B_1) = \frac{9}{8}$.

\begin{algorithm}[htb]
    \caption{$\frac{9}{8}$-AMMS-Three-Agents$(M, N, \bc)$}\label{alg:three-agents}
    \KwIn{An instance $\cI = (M, N, \bc)$ with $|N| = 3$}
    
    Let $P=  \{P_1, P_2, P_3\}$ be an MMS partition for agent $3$.\\
    \eIf {Agents 1 and 2 like different bundles in $P$} {
        Allocate agents 1 and 2 bundles that they like respectively. \\
        Give the remaining bundle to agent 3.\\
    }
    {
        Assume w.l.o.g. $c_1(P_1) \leq 1$ and $c_2(P_1) \leq 1$.\\
        \tcp{Part 1:MMS reduction}
        \eIf{$\exists i\in \{1,2\}, j\in \{2,3\}$ and $e_1, e_2 \in P_j$, such that $c_i(e_1+e_2) \geq 1$}{
            $X_3 \gets P_j$; $M'\gets M\setminus X_3$; $N'\gets N\setminus \{3\}$\;
            $(X_1,X_2)\gets$ Divide-and-Choose($M', i, N' \setminus \{ i \}$)\;
        }
        {
            $X_3\gets \argmax_{P\in \{P_2,P_3\}} c_1(P)$.\\
            $M' \gets M\setminus X_3$.\\
        }
        \tcp{Part 2:Load-Balancing partition for agent 1}
        $\{B_1, B_2\} \gets$ Load-Balancing($M',|N'|,c_1$)\;
        $X_2 \gets \argmin_{B\in \{B_1, B_2\} } c_2(B)$.\\
        $X_1\gets M'\setminus X_2$
    }
    \KwOut{An allocation $\bX = (X_1, X_2, X_3)$}
\end{algorithm}

\begin{example}
    Consider an instance of allocating a set of $8$ items $M = \{e_k\}_{k\in [8]}$ to three agents $\{1,2,3\}$, where the cost functions of agents are defined as follows.

    \begin{center}
    \begin{tabular}{ c|c|c|c|c|c|c|c|c } 
        $\qquad$ & $e_1$ & $e_2$ & $e_3$ & $e_4$ & $e_5$ & $e_6$ & $e_7$ & $e_8$ \\
        \hline
        $\mathbf{1}$ & $3/8$ & $3/8$ & $3/8$ & $3/8$ & $3/8$ & $1/4$ & $1/4$ & $5/8$\\ 
        \hline
        $\mathbf{2}$ & $3/8$ & $3/8$ & $3/8$ & $3/8$ & $3/8$ & $1/4$ & $1/4$ & $5/8$\\
        \hline
        $\mathbf{3}$ & $1/2$ & $1/2$ & $1/3$ & $1/3$ & $1/3$ & $1/3$ & $1/3$ & $1/3$\\
    \end{tabular}
    \end{center}

    It can be verified that $\MMS_i = 1$ for any $i\in N$.
    We have $P = \{P_1, P_2, P_3\}$ with $P_1 = \{e_1, e_2\}, P_2 = \{e_3, e_4, e_5\}, P_3 = \{e_6, e_7, e_8\}$ being an MMS partition for agent $3$.
    Note that $P_1$ is the only bundle that agents $1$ and $2$ likes and $c_1(P_1) \leq c_1(P_2) \leq c_1(P_3)$ follows.
    Since for any two items in $P_2, P_3$, the total cost is strictly less than $1$ for both agents $1$ and $2$, we let agents $3$ pick bundle $P_3$ and allocate the remaining items using the \emph{load-balancing} procedure.
    The \emph{load-balancing} procedure returns bundles $B_1, B_2$ such that $c_i(B_1) = \frac{9}{8}$ and $c_i(B_2) = \frac{3}{4}$ for $i\in \{1,2\}$.
    Since agent $2$ picks first, agent $1$ receives the bundle $B_1$ which costs $\frac{9}{8}$ for her.
\end{example}

\section{\texorpdfstring{$\frac{4}{3}$}{}-AMMS for Four Agents}

In this section, we show the existence of $\frac{4}{3}$-AMMS allocations for $n=4$ agents.
Given an instance $\cI = (N, M, \bc)$ where $N = \{1,2,3,4\}$, let $P = \{P_1, P_2, P_3, P_4\}$ be an MMS partition for agent $4$ and we construct an MMS-feasibility graph $G = (N \cup P, E)$.
%
If there exists a perfect matching in the graph $G$, then we can find an MMS allocation.
Hence in the following, we only consider the case that there is no perfect matching in the MMS-feasibility graph $G$.
We show that in such a case, there exists a $\frac{4}{3}$-AMMS allocation.
For ease of presentation, in the subsequent MMS-feasibility graphs in this section, we omit the edges connected to agent $4$.


\begin{theorem} \label{theorem:fourAgents}
    For any instance $\cI = (M, N, \mathbf{c})$ with four agents, there exists a $\frac{4}{3}$-AMMS allocation.
\end{theorem}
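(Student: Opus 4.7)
The plan is to extend the three-agent argument by leveraging the structure of the MMS-feasibility graph. First I would classify the obstructions to a perfect matching in $G$. Since agent $4$'s MMS partition $P$ gives $L(\{4\})=P$, Hall's theorem forces any deficient set $S$ (with $|L(S)|<|S|$) to lie entirely in $\{1,2,3\}$. This leaves two structural types of blocking: (i) some pair $\{i,j\}\subseteq\{1,2,3\}$ has $|L(\{i,j\})|=1$, i.e.\ two agents compete over the same single bundle; or (ii) no such pair exists, but $|L(\{1,2,3\})|\leq 2$, which forces the three agents $1,2,3$ to share exactly the same two MMS-feasible bundles, WLOG $\{P_1,P_2\}$.

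For case (i), I would directly generalize Lemmas~\ref{lemma:MMS-reduction-3agents} and~\ref{lemma:load-balancing-3agents}. Assuming WLOG that agents $1$ and $2$ both only like $P_1$, if there exist two items $e_1,e_2$ in some $P_j\in\{P_2,P_3,P_4\}$ with $c_i(e_1+e_2)\geq 1$ for some $i\in\{1,2\}$, Lemma~\ref{lemma: MMS-valid-reduction} lets me hand $P_j$ to agent~$4$ (who likes every bundle of $P$) and recurse via Theorem~\ref{thm:three-agents} on the resulting three-agent subinstance. Otherwise every pair of items inside a bundle of $\{P_2,P_3,P_4\}$ is light for both agents $1$ and $2$, and a load-balancing variant on $M\setminus P_1$ into three bundles — with $P_1$ assigned to one of agents $1,2$ and the other choosing among the load-balanced pieces after agents $3,4$ — yields a four-agent analogue of the $9/8$ analysis, the looser $4/3$ bound reflecting the extra bundle.

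For case (ii) I would invoke the atomic bundle framework: introduce an MMS partition $Q=\{Q_1,\dots,Q_4\}$ of, say, agent $1$ and examine the atoms $A_{kl}=Q_k\cap P_l$. Because agents $1,2,3$ all reject both $P_3$ and $P_4$, those bundles concentrate items that are heavy for at least one of these agents, which suggests giving one of $P_3, P_4$ to agent~$4$. The remaining task is to partition $M\setminus P_3$ (say) into three pieces that are MMS-feasible for agents $1,2,3$; the atomic cells provide the bookkeeping needed to swap carefully chosen items between $P_1\cup P_2$ and the residual piece originally taken from $P_4$, while ensuring every recipient's bundle stays at most $1$. Agent~$4$'s final bundle is $P_3$ plus at most one transferred item, which I expect to be of cost at most $\MMS_4/3$.

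I expect case (ii) to be the main obstacle: the allocation must simultaneously enforce three exact MMS thresholds and a $4/3$ bound on agent~$4$. The argument presumably proceeds by a sub-case split on how the $Q$-atoms cross $P$ and on whether any atom has cost exceeding $1/3$ for agent $4$; a three-agent load-balancing argument then handles the partition for agents $1,2,3$, while the rigidity of case (ii) — the three agents agreeing on the same two-bundle feasible set — is what I would exploit to bound the overflow of agent $4$ by $\MMS_4/3$, so that the final ratio is $4/3$ rather than something larger.
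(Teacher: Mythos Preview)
Your plan has two structural problems. First, the recursion in case~(i) does not work: applying Theorem~\ref{thm:three-agents} to the sub-instance $(M\setminus P_j,\{1,2,3\})$ only gives bounds relative to $\MMS_\ell(M\setminus P_j,3)$, not the original $\MMS_\ell(M,4)=1$. Lemma~\ref{lemma: MMS-valid-reduction} controls this quantity for the single agent who sees a heavy pair inside $P_j$, but gives no handle on the other two agents; there is no reason for $\MMS_2(M\setminus P_j,3)$ or $\MMS_3(M\setminus P_j,3)$ to be at most~$1$, so the recursion fails to certify MMS for them in the original instance. The load-balancing fallback is likewise underspecified: balancing $M\setminus P_1$ into three parts according to one agent's costs says nothing about whether agents~$3$ and~$4$ find any of those parts MMS-feasible.

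Second, in case~(ii) you target the wrong agent. Agent~$4$ owns the partition~$P$, so she can always receive some $P_j$ with $c_4(P_j)\le 1$; she is never the $4/3$-agent. The overflow must land on one of agents $1,2,3$, and the right device is a $(\tfrac{4}{3},1,1)$-partition of the remaining items for one of them, followed by a matching/load-balancing step among $\{1,2,3\}$ (this is the paper's Lemmas~\ref{lemma:4/3-partition} and~\ref{lemma:4/3MMSforThreeAgents}). Your case~(ii) --- no pair violator, $|L(\{1,2,3\})|=2$ --- does not force all three agents to like both bundles (e.g.\ $L(1)=\{P_1\}$, $L(2)=\{P_2\}$, $L(3)=\{P_1,P_2\}$), and in any event it is the \emph{easier} sub-case, handled in the paper without atomic bundles. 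The atomic-bundle machinery is needed precisely when $|L(\{1,2,3\})|=1$, i.e.\ all of $1,2,3$ like only $P_1$; that situation falls into your case~(i), where your proposed recursion breaks.
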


Since there is no perfect matching in the graph $G$, there must exist a subset $S\subseteq N$ such that $|S|>|L(S)|$ following Observation~\ref{observation:perfectMatching}.
We use $S^*$ to denote the one with maximum size.
Note that $|S^*| \neq 1$ since each agent has at least one MMS-feasible bundle in any $n$-partition of $M$, i.e., $|L(\{i\})| \geq 1$ for all $i\in N$.
On the other hand, we have $|S^*| \neq 4$ since $P$ is an MMS partition of agent $4$, which means that $L(\{4\}) = \{P_1, P_2, P_3, P_4\}$.
Hence to show Theorem~\ref{theorem:fourAgents}, it suffices to consider $|S^*| = 2$ or $|S^*| = 3$.
We remark that by Observation~\ref{observation:perfect-matching-reduce}, there exists a perfect matching between $N\setminus S^*$ and $P\setminus L(S^*)$.
Hence, for both cases, it remains to only consider the agents in $S^*$, since we can allocate each agent in $N\setminus S^*$ an MMS-feasible bundle.
We show that such an operation is a valid reduction, by the following observation.

\begin{observation}  \label{observation:Valid-Reduction}
    If $S^* \subseteq N$ is the maximum subset such that $|S^*| > |L(S^*)|$, then we can do a valid reduction on $\cI$ by allocating each agent in $N\setminus S^*$ an MMS-feasible bundle.
\end{observation}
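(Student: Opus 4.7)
The plan is to verify the two conditions that together constitute a valid reduction: (i) each agent in $N \setminus S^*$ actually receives an MMS-feasible bundle under the proposed procedure, and (ii) the surviving instance on agents $N' = S^*$ and the remaining items $M'$ satisfies $c_i(M') \le |N'|$ for every $i \in S^*$.

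For (i), I would invoke Observation~\ref{observation:perfect-matching-reduce} directly: it guarantees a matching $\mu$ that saturates $N \setminus S^*$ using only bundles drawn from $P \setminus L(S^*)$. Each matched pair is, by construction of the MMS-feasibility graph $G$, an edge of $G$, so the bundle $\mu(i)$ assigned to agent $i \in N \setminus S^*$ satisfies $c_i(\mu(i)) \le 1 = \MMS_i$. Thus every departing agent has her MMS threshold met, and the removed items form $\bigcup_{i \in N \setminus S^*} \mu(i) \subseteq \bigcup_{P_j \in P \setminus L(S^*)} P_j$.

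For (ii), the crucial structural fact is that for every bundle $P_j \in P \setminus L(S^*)$ and every remaining agent $i \in S^*$ we have $c_i(P_j) > 1$. Indeed, $P_j \notin L(S^*)$ means $P_j$ is not a neighbor of any agent in $S^*$, hence $(i, P_j)$ is not an edge of the MMS-feasibility graph, which is precisely the statement $c_i(P_j) > 1$. Since the reduction removes exactly $|N \setminus S^*| = n - |S^*|$ such bundles, summing over the matched bundles gives $c_i\!\left(\bigcup_{i' \in N \setminus S^*} \mu(i')\right) > n - |S^*|$ for every $i \in S^*$. Combining this strict inequality with the preliminary bound $c_i(M) \le n$ (which follows from the normalization $\MMS_i = 1$ and the footnote after the MMS definition) yields $c_i(M') = c_i(M) - c_i\!\left(\bigcup_{i'} \mu(i')\right) < n - (n - |S^*|) = |S^*| = |N'|$, which is the valid-reduction inequality.

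The argument is essentially a telescoping accounting, and the only non-trivial ingredient is reading off $c_i(P_j) > 1$ from $P_j \notin L(S^*)$ via the definition of the MMS-feasibility graph; that step is immediate. The only subtlety I foresee is ensuring that the matching supplied by Observation~\ref{observation:perfect-matching-reduce} really uses bundles exclusively from $P \setminus L(S^*)$ (as opposed to arbitrary bundles in $P$), but this is stated explicitly in that observation, so no further work is needed.
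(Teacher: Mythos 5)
Your proposal is correct and follows essentially the same route as the paper: invoke Observation~\ref{observation:perfect-matching-reduce} to obtain a perfect matching between $N \setminus S^*$ and $P \setminus L(S^*)$, note that every matched bundle costs more than $1$ to each agent in $S^*$ because it lies outside $L(S^*)$, and conclude $c_i(M') < n - (n - |S^*|) = |N'|$ from $c_i(M) \le n$. Your explicit check that each departing agent's bundle is MMS-feasible (via the edges of $G$) is a point the paper leaves implicit, but the argument is the same.
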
 

\begin{proof}
By Observation~\ref{observation:perfect-matching-reduce}, there exists a perfect matching $\mathcal{M}$ between agents $N\setminus S^*$ and bundles $P\setminus L(S^*)$.
Let $\cI' = (M', N', \bc')$ be the instance after removing agents and bundles according to $\mathcal{M}$. 
Let $k' = |N \setminus S^*|$ and use $P_{\mathcal{M}} = \{P'_1,\dots,P'_{k'}\}$ to denote the bundles matched in $\mathcal{M}$. 
For any agent $i\in N'$ we have
    \begin{equation*}
        c_i(M') = c_i(M)- \sum_{j\in [k']} c_i(P'_j) <n - k' =|N'| .
    \end{equation*}
    The first inequality holds because $P_{\mathcal{M}}\subseteq P \setminus L(S^*)$ and for any $i \in N'$ and any bundle $P'_j \in P_{\mathcal{M}}$, we have $c_i(P'_j) > 1$.
\end{proof}


We first show that when $|S^*| = 2$, there exists a $\frac{4}{3}$-AMMS allocation.
Assume w.l.o.g. that $S^* = \{1,2\}$ and $L(S^*) = \{P_1\}$ (see Figure~\ref{fig:mmsPartitionGraph|S|=2|N(S)|=1}).
Note that following Observation~\ref{observation:Valid-Reduction}, it remains to consider the reduced instance $\cI' = (M', N', \bc')$ while $N' = \{1,2\}$\footnote{We simply use $M'$ to denote the remaining items after we allocated bundles to agents $3,4$. The only property of $M'$ we will use is that $c_i(M') \leq 2$ for $i\in \{1,2\}$, which is maintained by the valid reduction.}.
%
%
We show that by calling the \emph{load-balancing} procedure, we can compute an allocation such that $c_1(X_1) \leq \frac{4}{3}$ and $c_2(X_2) \leq 1$.

\begin{figure}[htbp]
    \centering
    \begin{tikzpicture}
    [round/.style={circle, draw, minimum size=7.5mm},
    square/.style={rectangle, draw, minimum size=7mm}]
    \foreach \x in {1, 2, 3, 4}
        \node (\x) at (0,\x) [round] {$\x$};
    \foreach \x in {1, 2, 3, 4}
        \node (P\x) at (2,\x) [square] {$P_\x$};
    \draw (2) -- (P1);
    \draw (1) -- (P1);
    \draw[dashed, black, thick, rounded corners] (-0.5, 0.5) rectangle (0.5, 2.5);
    \draw[dashed, black, thick, rounded corners] (1.5, 0.5) rectangle (2.5, 1.5);
    \node at (-0.8, 1.5) {$S^*$};
    \node at (3, 1) {$L(S^*)$};
    \end{tikzpicture}
    \caption{MMS-feasibility graph with $|S^*| = 2, |L(S^*)| = 1$.}
    \label{fig:mmsPartitionGraph|S|=2|N(S)|=1}
\end{figure}
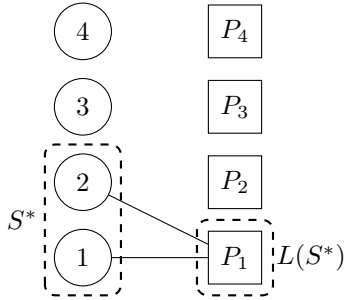

\begin{lemma}\label{lemma:|S|=2-load-balancing}
    Given a set of items $M'$ and an agent $i$ with $c_i(M')\leq 2$, the \emph{load-balancing} procedure returns a $(\frac{4}{3}, 1)$-partition for agent $i$.
\end{lemma}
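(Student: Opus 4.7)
The plan is to let $\{B_1,B_2\}$ be the partition produced by load-balancing and, after relabeling if necessary, assume $c_i(B_1)\geq c_i(B_2)$. I would then target two separate bounds: first, $c_i(B_2)\leq 1$, which falls out for free because $2c_i(B_2)\leq c_i(B_1)+c_i(B_2)=c_i(M')\leq 2$; and second, $c_i(B_1)\leq 4/3$, which is the real content of the lemma. Once both are established, $(B_1,B_2)$ (after relabeling) is the desired $(4/3,1)$-partition.

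To bound $c_i(B_1)$, I would let $e$ denote the last item added to $B_1$ by Algorithm~\ref{alg:LBA} and split on $|B_1|$. If $|B_1|=1$, then $c_i(B_1)=c_i(e)\leq 1\leq 4/3$ using the normalization $c_i(e)\leq \MMS_i=1$ that already appears in the preliminaries. The nontrivial case is $|B_1|\geq 2$, where I plan to combine two complementary lower bounds on $c_i(B_2)$ drawn from the greedy load-balancing invariant.

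Concretely, in the case $|B_1|\geq 2$, pick any other item $e'\in B_1\setminus\{e\}$. Since items are processed in decreasing cost order and $e$ is the last one placed into $B_1$, we have $c_i(e')\geq c_i(e)$, hence $c_i(B_1-e)\geq c_i(e)$ and in particular $c_i(B_1)\geq 2c_i(e)$. At the moment $e$ was assigned, $B_1$ had minimum cost, so $c_i(B_1-e)\leq c_i(B_2)$, which gives simultaneously $c_i(B_2)\geq c_i(B_1)-c_i(e)$ and $c_i(B_2)\geq c_i(e)$. From the first inequality together with $c_i(M')\leq 2$ I get $c_i(B_1)\leq 1+c_i(e)/2$, while the two bounds $c_i(B_1)\geq 2c_i(e)$ and $c_i(B_2)\geq c_i(e)$ add up to $c_i(M')\geq 3c_i(e)$, forcing $c_i(e)\leq 2/3$. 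Plugging this back yields $c_i(B_1)\leq 1+1/3=4/3$.

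\textbf{Main obstacle.} The delicate step is extracting two genuinely independent lower bounds on $c_i(B_2)$, one scaling with $c_i(B_1)-c_i(e)$ and one scaling with $c_i(e)$, from the single greedy-swap invariant, and invoking the decreasing-order processing to promote the mere existence of a sibling item $e'\in B_1\setminus\{e\}$ into the bound $c_i(B_1-e)\geq c_i(e)$. Without this second inequality, the purely greedy estimate $c_i(B_1)\leq 1+c_i(e)/2$ combined with $c_i(e)\leq 1$ only gives $3/2$; pinning $c_i(e)\leq 2/3$ is exactly what tightens the ratio to $4/3$.
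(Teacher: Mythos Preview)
Your proof is correct and uses essentially the same ingredients as the paper's: the load-balancing invariant $c_i(B_1-e)\leq c_i(B_2)$, the decreasing-order fact $c_i(B_1-e)\geq c_i(e)$ when $|B_1|\geq 2$, and the normalization $c_i(e)\leq 1$ for the singleton case. The only cosmetic difference is that the paper argues by contradiction (assuming $c_i(B_1)>4/3$ forces $c_i(e)>2/3$ while $c_i(B_1-e)<2/3$), whereas you derive $c_i(e)\leq 2/3$ and $c_i(B_1)\leq 1+c_i(e)/2$ directly; the underlying inequalities are the same.
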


\begin{proof}
    Let $\{B_1, B_2\}$ be the partition returned by the \emph{load-balancing} procedure.
    We assume w.l.o.g. that $c_i(B_1) \geq c_i(B_2)$.
    Note that $c_i(B_2) \leq \frac{1}{2} \cdot c_i(M') \leq 1$.
    It remains to show that $c_i(B_1) \leq \frac{4}{3}$.
    Suppose by contradiction that $c_i(B_1) > \frac{4}{3}$.
    Recall that $c_i(e) \leq 1$ for all $e\in M'$, which implies that $|B_1| \geq 2$.
    Let $e$ be the last item that is allocated to $B_1$.
    According to the \emph{load-balancing} procedure, we have
    \begin{equation*}
        c_i(B_1 - e) \leq c_i(B_2) = c_i(M') - c_i(B_1) < \frac{2}{3}.
    \end{equation*}
    Recall that $c_i(B_1) > \frac{4}{3}$, we have $c_i(e) > \frac{2}{3}$, which is a contradiction since the \emph{load-balancing} procedure allocates items following the decreasing order of costs.
    In other words, the cost of item $e$ cannot be larger than the cost of any item in $B_1 - e$.
\end{proof}

Combining Observation~\ref{observation:Valid-Reduction} and Lemma~\ref{lemma:|S|=2-load-balancing}, we have the following lemma.
\begin{lemma} \label{lemma:|S|=2}
    When $|S^*| = 2$, there exists a $\frac{4}{3}$-AMMS allocation.
\end{lemma}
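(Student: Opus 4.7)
The plan is to combine the valid-reduction observation with the load-balancing lemma, and then use a simple choose-from-two argument on the remaining two agents.

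First, I would invoke Observation~\ref{observation:Valid-Reduction}. Since $|S^*|=2$ with $S^*=\{1,2\}$ and $L(S^*)=\{P_1\}$, Observation~\ref{observation:perfect-matching-reduce} guarantees a perfect matching between $\{3,4\}$ and two of the three bundles in $\{P_2,P_3,P_4\}$. Assign each of agents $3$ and $4$ her matched bundle; each matched bundle is MMS-feasible for the agent receiving it, so both agents achieve cost at most $\MMS=1$. This leaves a reduced instance $\cI'=(M',\{1,2\},\bc')$ with $M'$ equal to $P_1$ together with the one bundle in $\{P_2,P_3,P_4\}$ not used by the matching, and (by the validity of the reduction) $c_i(M')\le 2$ for $i\in\{1,2\}$.

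Next, I would apply Lemma~\ref{lemma:|S|=2-load-balancing} to the reduced instance using agent $1$'s cost function, obtaining a partition $\{B_1,B_2\}$ of $M'$ that is a $(\tfrac43,1)$-partition for agent $1$. Without loss of generality $c_1(B_1)\le \tfrac{4}{3}$ and $c_1(B_2)\le 1$. Then I would let agent $2$ choose first from $\{B_1,B_2\}$ and give the remaining bundle to agent $1$. Since $c_2(M')\le 2$, we have $\min\{c_2(B_1),c_2(B_2)\}\le \tfrac12\cdot c_2(M')\le 1$, so agent $2$ gets a bundle of cost at most $1=\MMS_2$. Regardless of agent $2$'s choice, agent $1$ receives either $B_2$ (cost at most $1$) or $B_1$ (cost at most $\tfrac43$), so in either case $c_1(X_1)\le \tfrac43 \cdot \MMS_1$.

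Combined with the MMS-feasible bundles already assigned to agents $3$ and $4$, this produces a $\tfrac{4}{3}$-AMMS allocation: agents $2,3,4$ receive their MMS, and only agent $1$ may be degraded, by at most a factor of $\tfrac43$. There is essentially no obstacle here beyond correctly tracking the normalization through the reduction — which is exactly what Observation~\ref{observation:Valid-Reduction} packages — and the key numerical content ($c_1(B_1)\le \tfrac43$, $\min_i c_2(B_i)\le 1$) has already been isolated in Lemma~\ref{lemma:|S|=2-load-balancing} and the pigeonhole bound on $c_2(M')$.
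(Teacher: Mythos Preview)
Your proposal is correct and follows essentially the same approach as the paper: reduce to agents $\{1,2\}$ via the perfect matching between $\{3,4\}$ and $P\setminus L(S^*)$, apply the load-balancing lemma with respect to $c_1$, and let agent $2$ pick first using the pigeonhole bound $\min\{c_2(B_1),c_2(B_2)\}\le 1$. The paper's proof is organized identically, with only cosmetic differences in phrasing.
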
 
\begin{proof}
    Following Observation~\ref{observation:perfect-matching-reduce}, there exists a perfect matching between agents $\{3,4\}$ and bundles $\{ P_2, P_3, P_4 \}$.
    After allocating agents $3,4$ their MMS-feasible bundles, it remains to consider the allocation of remaining items $M'$ and agents $1,2$ with $c_i(M') < 2$ for $i\in \{1,2\}$.
    By Lemma~\ref{lemma:|S|=2-load-balancing}, calling the \emph{load-balancing} procedure on $c_1$ returns a $(\frac{4}{3}, 1)$-partition for agent $1$.
    Since $c_2(M') < 2$, there exists a bundle in the partition that is MMS-feasible for agent $2$.
    We allocate the MMS-feasible bundle to agent $2$ and the remaining bundle to agent $1$, which costs at most $\frac{4}{3}$ to her.
\end{proof}

To prove Theorem~\ref{theorem:fourAgents}, it remains to consider the case that $|S^*| = 3$.
It must hold that $S^* = \{1,2,3\}$ since all bundles in $P$ are MMS-feasible for agent $4$.
We complete the proof by showing the following two lemmas. Due to page limitations, we defer their proofs to Appendix \ref{appendix:missing-proofs}.

\begin{lemma} \label{lemma:|S|=3|N(S)|=2}
    When $|S^*| = 3$ and $|L(S^*)| = 2$, there exists a $\frac{4}{3}$-AMMS allocation.
\end{lemma}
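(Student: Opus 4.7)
The plan is to apply Observation~\ref{observation:Valid-Reduction}: since $|S^*| = 3$ and $|L(S^*)| = 2$, agent 4 can be assigned a bundle from $P \setminus L(S^*) = \{P_3, P_4\}$ (both are MMS-feasible for agent 4 since $P$ is her MMS partition), reducing to a problem on agents $S^* = \{1,2,3\}$. Assume without loss of generality that $L(S^*) = \{P_1, P_2\}$, so for every $i \in \{1,2,3\}$ one has $c_i(P_1), c_i(P_2) \leq 1$ while $c_i(P_3), c_i(P_4) > 1$. The remaining goal is to arrange bundles so that two agents in $S^*$ receive cost $\leq 1$ and one receives cost $\leq 4/3$.

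I would first dispose of the easy case: if some $i^* \in \{1,2,3\}$ and $j^* \in \{3,4\}$ satisfy $c_{i^*}(P_{j^*}) \leq 4/3$, assign $P_{j^*}$ to agent $i^*$ as the single ``bad'' bundle, give the other bundle in $\{P_3, P_4\}\setminus\{P_{j^*}\}$ to agent 4, and hand $\{P_1, P_2\}$ in any order to the two remaining agents in $\{1,2,3\}\setminus\{i^*\}$ (both MMS-feasible because $\{P_1, P_2\} \subseteq L(S^*)$). This directly yields a $4/3$-AMMS allocation. The hard case is $c_i(P_3), c_i(P_4) > 4/3$ for every $i \in \{1,2,3\}$, which combined with $c_i(M) \leq n = 4$ forces $c_i(P_1 \cup P_2) < 4/3$, so $P_1 \cup P_2$ itself becomes the natural bad bundle assigned to one of the three agents. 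I would then need to repartition $P_3 \cup P_4$ into three sub-bundles for agent 4 and the remaining two agents in $\{1,2,3\}$ so that each sub-bundle is MMS-feasible for its recipient, typically by keeping one of $P_3, P_4$ intact for agent 4 (cost $\leq 1$) and splitting the other into two MMS-feasible pieces for the remaining two agents.

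The main obstacle is that for the two remaining agents the costs of $P_3$ and $P_4$ already exceed $4/3$, so a direct application of load-balancing or divide-and-choose on a single $P_j$ cannot guarantee two simultaneously MMS-feasible pieces for two different agents. To address this, I would invoke the atomic bundle framework by introducing an MMS partition $Q = \{Q_1, Q_2, Q_3, Q_4\}$ of one of the three agents, say agent 3, and examining the intersections $A_{jk} = P_j \cap Q_k$. Since $c_3(Q_k) \leq 1$ while $c_3(P_3), c_3(P_4) > 4/3$, both $P_3$ and $P_4$ must straddle at least two $Q$-bundles, yielding a fine-grained decomposition of $P_3 \cup P_4$. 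Recombining these atomic pieces across $P_3$ and $P_4$ guided by $Q$'s structure, together with the freedom to choose which agent in $\{1,2,3\}$ absorbs $P_1 \cup P_2$ and which of $P_3, P_4$ is kept intact for agent 4, should produce the required 3-partition. The combinatorial case analysis over the non-empty atomic sub-bundles, and over which agent plays which role, is the technical heart of the lemma.
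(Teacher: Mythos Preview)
Your proposal contains a fundamental misreading of the definition of $L(S^*)$ that breaks the argument. You assert that $L(S^*)=\{P_1,P_2\}$ implies $c_i(P_1)\le 1$ \emph{and} $c_i(P_2)\le 1$ for every $i\in\{1,2,3\}$. But $L(S^*)$ is the \emph{neighborhood} of $S^*$ in the MMS-feasibility graph: a bundle $P_j$ lies in $L(S^*)$ as soon as \emph{some} agent in $S^*$ likes it. All that $L(S^*)=\{P_1,P_2\}$ guarantees is that no agent in $S^*$ likes $P_3$ or $P_4$, and each agent in $S^*$ likes at least one of $P_1,P_2$. The configuration where agents $1$ and $2$ like only $P_1$ while agent $3$ likes only $P_2$ is perfectly consistent with $|S^*|=3$, $|L(S^*)|=2$ (and $S^*$ being maximal). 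In that configuration your ``easy case'' fails: if $i^*=3$ takes some $P_{j^*}\in\{P_3,P_4\}$, you are left handing $P_1$ and $P_2$ to agents $1$ and $2$, but whichever of them receives $P_2$ is not guaranteed MMS.

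The paper's proof avoids this precisely by splitting on whether every agent in $S^*$ likes a single bundle or some agent likes both. In the first sub-case two agents (say $1,2$) share the same unique liked bundle $P_1$; one assigns $P_2$ to agent $3$, $P_4$ to agent $4$, and then runs load-balancing on $P_1\cup P_3$ (total cost $<2$ for each of agents $1,2$) to get a $(\tfrac{4}{3},1)$-partition via Lemma~\ref{lemma:|S|=2-load-balancing}. In the second sub-case, the agent who likes both $P_1$ and $P_2$ (say agent $1$) is used directly: with $c_1(P_1)\le c_1(P_2)\le c_1(P_3)\le c_1(P_4)$ one has $c_1(P_2)\le 1$ and $c_1(P_1)+c_1(P_3)\le 2$, so Lemma~\ref{lemma:4/3-partition} yields a $(\tfrac{4}{3},1,1)$-partition for agent $1$, and Lemma~\ref{lemma:4/3MMSforThreeAgents} finishes. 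No atomic-bundle machinery is needed here; that framework only enters for the harder $|L(S^*)|=1$ case. Your ``hard case'' plan (combining $P_1\cup P_2$ into one bundle and trying to split a single $P_j$ with cost exceeding $4/3$ into two simultaneously MMS-feasible pieces for two different agents via atomic bundles) is both more complicated and left as a speculative sketch; once the $L(S^*)$ issue is corrected, the paper's two-case split is the cleaner route.
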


\begin{lemma} \label{lemma:|S|=3|N(S)|=1}
    When $|S^*| = 3$ and $|L(S^*)| = 1$, there exists a $\frac{4}{3}$-AMMS allocation.
\end{lemma}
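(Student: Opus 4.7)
The plan is to exploit the rigid structure of $|S^*|=3$ and $|L(S^*)|=1$ to reduce to a two-agent residual problem solvable by divide-and-choose. Since $L(\{4\})=P$, we must have $S^*=\{1,2,3\}$, and after relabeling we may assume $L(S^*)=\{P_1\}$. This yields $c_i(P_1)\le 1$ for every $i\in N$ (agent $4$ included, since $P$ is her MMS partition) and $c_i(P_j)>1$ for every $i\in\{1,2,3\}$ and $j\in\{2,3,4\}$; in particular, $c_i(P_2)+c_i(P_3)+c_i(P_4)\le 4-c_i(P_1)$ for each $i\in\{1,2,3\}$.

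The first two steps of the allocation are deterministic: give $P_1$ to agent $1$ (who is MMS-satisfied since $c_1(P_1)\le 1$) and give some carefully chosen bundle $P_{k^*}\in\{P_2,P_3,P_4\}$ to agent $4$ (who is MMS-satisfied since $c_4(P_{k^*})\le 1$). By Observation~\ref{observation:Valid-Reduction} this is a valid reduction, leaving agents $\{2,3\}$ with the residual items $B:=P_{j_1}\cup P_{j_2}$, where $\{j_1,j_2\}=\{2,3,4\}\setminus\{k^*\}$. For this two-agent residual instance I would apply divide-and-choose with divider $\alpha\in\{2,3\}$ and chooser $\beta\in\{2,3\}\setminus\{\alpha\}$. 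By Observation~\ref{observation:divide-and-choose} the divider's cost is at most $\MMS_\alpha(B,2)$ and the chooser's cost is at most $\tfrac12\,c_\beta(B)$, so the allocation is $\tfrac43$-AMMS provided
\[
\MMS_\alpha(B,2)\le 1 \qquad\text{and}\qquad c_\beta(B)\le 8/3.
\]

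The chooser's bound is the easier one to arrange: by choosing $k^*\in\argmax_{k\in\{2,3,4\}}c_\beta(P_k)$, averaging gives $c_\beta(P_{k^*})\ge (c_\beta(M)-c_\beta(P_1))/3$, whence $c_\beta(B)\le\tfrac23(c_\beta(M)-c_\beta(P_1))\le\tfrac23(4-c_\beta(P_1))\le 8/3$. The divider's bound $\MMS_\alpha(B,2)\le 1$ is the main obstacle, because it even forces $c_\alpha(B)\le 2$, which is strictly stronger than what pigeonhole alone provides. My plan is to establish it via an atomic-bundle style argument: intersect agent $\alpha$'s own MMS partition $Q^\alpha=\{Q^\alpha_1,\ldots,Q^\alpha_4\}$ (each of $c_\alpha$-cost $\le 1$) with agent $4$'s partition $P$. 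Because $c_\alpha(P_j)>1$ for every $j\ge 2$, each such $P_j$ must span at least two bundles of $Q^\alpha$, which I expect to let me identify two bundles of $Q^\alpha$ whose union equals $B$ (possibly after an item swap justified by Lemma~\ref{lemma: MMS-valid-reduction}) and thereby witness $\MMS_\alpha(B,2)\le 1$. With two possible dividers and three possible indices $k^*$, a case analysis on the relative magnitudes of $c_2(P_j)$ and $c_3(P_j)$ should cover all configurations; in the residual corner cases I expect an extended load-balancing argument of the type in Lemma~\ref{lemma:|S|=2-load-balancing} to handle the allocation directly without passing through divide-and-choose.
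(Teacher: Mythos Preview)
Your plan has a fatal structural gap, not just an unfinished step. Once you hand $P_1$ to agent~$1$ and any $P_{k^*}\in\{P_2,P_3,P_4\}$ to agent~$4$, the residual set is $B=P_{j_1}\cup P_{j_2}$ with $j_1,j_2\in\{2,3,4\}$. By the case hypothesis $L(S^*)=\{P_1\}$, every agent $\alpha\in\{2,3\}$ has $c_\alpha(P_{j_1})>1$ and $c_\alpha(P_{j_2})>1$, hence $c_\alpha(B)>2$. This immediately rules out $\MMS_\alpha(B,2)\le 1$ for \emph{both} candidate dividers: any $2$-partition of $B$ has total $c_\alpha$-cost exceeding $2$, so some part exceeds $1$. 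No atomic-bundle rearrangement inside $B$ can help, since the obstruction is the total mass, not its distribution. Symmetrically, the chooser can only be guaranteed $\tfrac12 c_\beta(B)>1$. Thus neither of agents $2,3$ can be MMS-satisfied from $B$ via divide-and-choose (or load-balancing, which faces the same total-mass bound), and you end with at most two MMS-satisfied agents. That is $\tfrac43$-MMS, not $\tfrac43$-AMMS.

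The paper avoids this by \emph{not} committing agent~$1$ to $P_1$. It removes only one bundle (assigned to agent~$4$) and keeps a three-agent residual $\{1,2,3\}$ on $M'=M\setminus P_{k^*}$. The atomic bundles $b_{ij}=B_i\cap P_j$ (with $B$ an MMS partition of agent~$1$) are then used to \emph{reshuffle} items across the remaining three parts so as to produce a $(\tfrac43,1,1)$-partition of $M'$ for agent~$1$; the two ``$\le 1$'' parts are stitched from pieces of both $B$ and $P$, which is why they can be MMS-feasible for agent~$1$ even though none of $P_2,P_3,P_4$ is. Lemma~\ref{lemma:4/3MMSforThreeAgents} then finishes. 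The moral: in this case you must keep three agents in play and redesign the partition, rather than peel off agent~$1$ with $P_1$ and hope the remaining two-agent instance is light enough.
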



By Observation~\ref{observation:Valid-Reduction}, it remains to consider the reduced instance $\cI' = (M', N', \bc')$ where $N' = \{1,2,3\}$.
%
In the following, we propose a procedure for the reduced instance $\cI'$.
We show that when $\cI'$ ensures some property, there exists a $\frac{4}{3}$-AMMS allocation for the original instance $\cI$.

\begin{lemma} \label{lemma:4/3-partition}
    Given a reduced instance $\cI' = (M', N', \bc')$ with $|N'| = 3$,
    if there exists $T \subseteq M'$ and $i \in N'$ s.t. $c_i(T) \leq 1$ and $c_i(M' \setminus T) \leq 2$, then we can compute a $(\frac{4}{3}, 1, 1)$-partition for agent $i$.
\end{lemma}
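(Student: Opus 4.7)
The plan is to treat the distinguished set $T$ as one of the three bundles directly, and then split the remaining items $M' \setminus T$ into the other two bundles by reusing the load-balancing analysis that was already carried out for the $|S^*| = 2$ case.

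Concretely, I would set $P_1 = T$, so by hypothesis $c_i(P_1) = c_i(T) \le 1$. It then remains to partition $M' \setminus T$ into two bundles whose $c_i$-costs are at most $\frac{4}{3}$ and $1$, respectively. But the hypothesis also guarantees $c_i(M' \setminus T) \le 2$, which is exactly the precondition of Lemma~\ref{lemma:|S|=2-load-balancing}. Applying the load-balancing procedure to $M' \setminus T$ with $n' = 2$ and cost function $c_i$ therefore yields a partition $\{B_1, B_2\}$ with $c_i(B_1) \le \frac{4}{3}$ and $c_i(B_2) \le 1$ (after relabelling so that $B_1$ is the more costly bundle).

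Combining these, the three-bundle partition $(T, B_1, B_2)$ of $M'$ satisfies $c_i(T) \le 1$, $c_i(B_1) \le \frac{4}{3}$, and $c_i(B_2) \le 1$, which is precisely a $(\frac{4}{3}, 1, 1)$-partition for agent $i$ in the sense defined in the preliminaries. There is no real obstacle here beyond recognising that the hypothesis $c_i(M' \setminus T) \le 2$ is exactly what is needed to invoke the earlier two-bundle load-balancing lemma; no new combinatorial argument is required, and the lemma serves primarily as a packaging of the two-bundle result into a three-bundle statement that will be convenient for the subsequent case analysis in Lemmas~\ref{lemma:|S|=3|N(S)|=2} and~\ref{lemma:|S|=3|N(S)|=1}.
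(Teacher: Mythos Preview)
Your proposal is correct and matches the paper's proof essentially verbatim: the paper likewise takes $T$ as one bundle, applies load-balancing to $M'\setminus T$ via Lemma~\ref{lemma:|S|=2-load-balancing}, and concludes that $\{T, B_1, B_2\}$ is a $(\tfrac{4}{3},1,1)$-partition for agent $i$.
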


\begin{proof}
    We call the \emph{load-balancing} procedure on items $M'\setminus T$ with respect to $c_i$.
    By Lemma~\ref{lemma:|S|=2-load-balancing}, the returned partition $\{B_1, B_2\}$ is a $(\frac{4}{3}, 1)$-partition for agent $i$.
    Hence for agent $i$, the partition $\{ T, B_1, B_2 \}$ is a $(\frac{4}{3}, 1, 1)$-partition.
%
\end{proof}



\begin{lemma}\label{lemma:4/3MMSforThreeAgents}
    Given a reduced instance $\cI' = (M', N', \bc')$ with $|N'| = 3$, if there exists a $(\frac{4}{3}, 1, 1)$-partition for some agent $i\in N'$, then we can compute a $\frac{4}{3}$-AMMS allocation for the original instance $\cI = (M, N, \bc)$.
\end{lemma}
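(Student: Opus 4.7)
The plan is to treat the allocation of the three bundles $\{B_1,B_2,B_3\}$ of the given $(\frac{4}{3},1,1)$-partition for agent $i$ to the three remaining agents $\{i,j,k\}=N'$ as a bipartite matching problem, exploiting that $c_i(B_1)\le \frac{4}{3}$ (a potential ``loose'' edge) while $c_i(B_2),c_i(B_3)\le 1$. Because the instance is reduced, $c_a(M')\le 3$ for every $a\in N'$, which implies that each of $j,k$ has at least one MMS-feasible bundle among $\{B_1,B_2,B_3\}$ (otherwise the sum of the three costs would exceed $3$).

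The first step I would carry out is to build the bipartite graph $H$ between $\{j,k\}$ and $\{B_1,B_2,B_3\}$ with an edge $(a,B_\ell)$ whenever $c_a(B_\ell)\le 1$, and attempt to find a matching saturating $\{j,k\}$. If one exists, assign $j,k$ to their matched bundles (cost $\le 1$ each), give $i$ the remaining bundle (cost $\le \frac{4}{3}$ by the partition property), and declare success: the resulting allocation is $\frac{4}{3}$-AMMS.

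If the matching fails, Hall's theorem together with the non-isolation remark forces both $j$ and $k$ to have their unique MMS-feasible bundle equal to the same $B^*\in\{B_1,B_2,B_3\}$. The second step is to enumerate the six permutations of bundles to agents and verify the $\frac{4}{3}$-AMMS condition (at most one agent above $1$, and that agent at most $\frac{4}{3}$) for each. Every permutation that sends $B^*$ to $i$ forces both $j$ and $k$ onto bundles of cost $>1$ and must be discarded. The remaining permutations succeed as soon as one of $c_j(B_\ell),c_k(B_\ell)$ with $B_\ell\ne B^*$ is at most $\frac{4}{3}$ (together with $c_i(B_1)\le 1$ in the permutations where $i$ gets $B_1$).

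The step I expect to be the main obstacle is the extreme case in which all of these conditions fail, so $c_a(B_\ell)>\frac{4}{3}$ for every $a\in\{j,k\}$ and every $B_\ell\ne B^*$. Combined with $c_a(M')\le 3$, this pins down $c_j(B^*),c_k(B^*)<\frac{1}{3}$, leaving a $\frac{2}{3}$-sized slack on $B^*$ for both other agents. I would resolve this corner by a repartitioning step: transfer a carefully chosen subset $S$ of items from a ``heavy'' bundle $B_\ell\ne B^*$ into $B^*$, so that the modified partition $\{B_\ell\setminus S,\,B^*\cup S,\,\text{(third bundle)}\}$ is still a $(\frac{4}{3},1,1)$-partition for agent $i$ while simultaneously driving $c_j(B_\ell\setminus S)\le 1$ (or $\le\frac{4}{3}$) without destroying MMS-feasibility of $B^*\cup S$ for $j$ and $k$. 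A pigeonhole/averaging argument built on $c_j(B_\ell)>\frac{4}{3}$, the slack $1-c_j(B^*)>\frac{2}{3}$, and the normalization $c_a(e)\le 1$ on single items should produce such an $S$; after the swap, the matching step of the first paragraph applies and produces the desired $\frac{4}{3}$-AMMS allocation.
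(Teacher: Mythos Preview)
Your opening step---building the bipartite graph between $\{j,k\}$ and $\{B_1,B_2,B_3\}$ and assigning by a saturating matching when one exists---is exactly what the paper does. The divergence, and the gap, is in the fallback.

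When the matching fails (so both $j$ and $k$ have $B^*$ as their unique MMS-feasible bundle), you propose to move a subset $S$ from some $B_\ell\ne B^*$ into $B^*$ so that the new triple remains a $(\tfrac{4}{3},1,1)$-partition for $i$, while $c_j(B_\ell\setminus S)\le \tfrac{4}{3}$ and $c_j(B^*\cup S),c_k(B^*\cup S)\le 1$. This step is not justified, and in general such an $S$ need not exist: the set $S$ must simultaneously land in a target interval for $c_j$, a target interval for $c_k$, \emph{and} keep $c_i(B^*\cup S)$ at most $\tfrac{4}{3}$ (or at most $1$ if the unchanged third bundle is the one exceeding $1$ for $i$). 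With indivisible items and three unrelated cost functions there is no pigeonhole or averaging argument that produces this; for instance, if $B_\ell=\{e_1,e_2\}$ with $c_j(e_1)=c_j(e_2)=0.7$, $c_k(e_1)=0.9$, $c_k(e_2)=0.5$, $c_k(B^*)=0.2$, and $c_i(B^*)=c_i(e_2)=0.9$, then $S=\{e_1\}$ violates the $k$-constraint while $S=\{e_2\}$ pushes $c_i(B^*\cup S)=1.8>\tfrac{4}{3}$. Even when such an $S$ does exist, your claim that ``the matching step of the first paragraph applies'' is not secured: if you only achieve $c_j(B_\ell\setminus S)\le\tfrac{4}{3}$ rather than $\le 1$, both $j$ and $k$ may again have $B^*\cup S$ as their unique feasible bundle and you are back where you started.

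The paper avoids all of this with a one-line idea: do not try to repair the partition for three agents at once. Since $i$ has at least two bundles of cost $\le 1$ and $j,k$ like only $B^*$, there is a bundle $B$ with $c_i(B)\le 1$ and $c_j(B),c_k(B)>1$. Give $B$ to $i$; the remaining items then cost $<2$ for each of $j$ and $k$. Now throw away the old partition entirely and run load-balancing on the remaining items with respect to $c_j$, which by Lemma~\ref{lemma:|S|=2-load-balancing} yields a $(\tfrac{4}{3},1)$-partition for $j$; let $k$ take her preferred half (cost $\le 1$) and $j$ the other (cost $\le\tfrac{4}{3}$). The point you are missing is that once $i$'s bundle is fixed, you are free to repartition the rest solely against $c_j$, with $k$ handled by choice rather than by constraint.
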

\begin{proof}
We remark that agent $4$ receives a bundle that is MMS-feasible only for her, hence it remains to consider the reduced instance $\cI'$.
Let $\{B'_1, B'_2, B'_3\}$ be a $(\frac{4}{3}, 1, 1)$-partition (of items $M'$) for agent $1$.
We consider which bundles agents $2$ and $3$ like. 
If there exists a perfect matching between agents $\{2, 3\}$ and bundles $\{B'_1, B'_2, B'_3\}$, we allocate agents $2$ and $3$ each a bundle following the matching and agent $1$ receive the remaining bundle.
The bundle agent $1$ received costs at most $\frac{4}{3}$ to her.

Consider otherwise that there is only one bundle being MMS-feasible for agents $2$ and $3$. 
There must exist a bundle that is MMS-feasible for agent $1$ but not for agents $2$ and $3$. 
After assigning such a bundle to agent $1$, the total cost of the remaining items is less than $2$ for agents $2$ and $3$.
We call the \emph{load-balancing} procedure on the remaining items w.r.t. $c_2$.
By Lemma~\ref{lemma:|S|=2-load-balancing}, the returned partition is a $(\frac{4}{3},1)$-partition for agent $2$.
By letting agent $3$ pick the bundle she prefers, agent $2$ receives the remaining bundle which costs at most $4/3$.
\end{proof}

We conclude the above procedure in Algorithm~\ref{alg:AMMS-Three-Agents}.

\begin{algorithm}[H]
    \caption{$\frac{4}{3}$-AMMS-Three-Agents($\cI', B'_1, B'_2, B'_3$)}
    \label{alg:AMMS-Three-Agents}
    \KwIn{Reduced instance $\cI' = (M', N', \bc')$, a partition $\{B'_1, B'_2, B'_3\}$ that is a $(\frac{4}{3}, 1, 1)$-partition for agent $1$.}
        \eIf{There exists a perfect matching between agents 2, 3 and bundles $\{B'_1, B'_2, B'_3 \}$}{
            Allocate the bundle to agents 1 and 2 respectively based on the perfect matching. Agent $1$ receives the remaining one.
        }
        {
            Let agent $1$ pick the bundle that she likes while agents $2$ and $3$ dislike.\\
            $M' \gets M' \setminus X_1$.\\
            $(X_2, X_3) \gets$ Divide-and-Choose$(M', 2, 3)$.
        }
    \KwOut{$(X_1, X_2, X_3)$}
\end{algorithm}

Given Lemma~\ref{lemma:4/3-partition} and~\ref{lemma:4/3MMSforThreeAgents}, in the following of this section we mainly focus on computing such a $(\frac{4}{3}, 1, 1)$-partition for some agent.
We first consider the case that $|S^*| = 3$ and $|L(S^*)| = 2$, in which there are two bundles, say $P_3$ and $P_4$, that are MMS-feasible only for agent $4$.

\begin{replemma} {lemma:|S|=3|N(S)|=2}
    When $|S^*| = 3$ and $|L(S^*)| = 2$, there exists a $\frac{4}{3}$-AMMS allocation.
\end{replemma}

\begin{proof}
    We first consider the case that each agent in $S^*$ only has one MMS-feasible bundle.
    Since $|S^*| = 3, |L(S^*)| = 2$, there must exist two agents that like the same bundle, say that agents $1,2$ like bundle $P_1$.
    As Figure \ref{fig:|S|=3|N(S)|=2A} shows, we can allocate bundle $P_2$ to agent $3$ and bundle $P_4$ to agent $4$, which guarantees MMS for both of them.
    By Observation~\ref{observation:Valid-Reduction}, it remains to consider the allocation of items $P_1 \cup P_3$ to agents $1,2$ with $c_i(P_1 \cup P_3) < 2$ for both $i\in \{1,2\}$.
    By Lemma~\ref{lemma:|S|=2-load-balancing}, we can find a $(\frac{4}{3}, 1)$-partition on items $P_1 \cup P_3$ for agents $1$.
    Letting agent $2$ pick first and agent $1$ receive the remaining one results in a $\frac{4}{3}$-AMMS allocation.

\begin{figure}[htbp]
\centering
\begin{subfigure}{.25\textwidth}
  \centering
  \begin{tikzpicture} 
        [round/.style={circle, draw, minimum size=7.5mm},
        square/.style={rectangle, draw, minimum size=7mm}]
        \foreach \x in {1, 2, 3, 4}
           \node (\x) at (0,\x) [round] {$\x$};
        \foreach \x in {1, 2, 3, 4}
            \node (P\x) at (2,\x) [square] {$P_\x$};
        \draw (1) -- (P1);
        \draw (2) -- (P1);
        \draw (3) -- (P2);
        \draw[dashed, black, thick, rounded corners] (-0.5, 0.5) rectangle (0.5, 3.5);
        \draw[dashed, black, thick, rounded corners] (1.5, 0.5) rectangle (2.5, 2.5);
        \node at (-0.8, 2) {$S^*$};
        \node at (3, 1.5) {$L(S^*)$};
        \end{tikzpicture}
  \caption{MMS feasibility graph with $|S^*| = 3, |L(S^*)| = 2$ while every agent likes only one bundle.}
  \label{fig:|S|=3|N(S)|=2A}
\end{subfigure}\qquad \quad
\begin{subfigure}{.25\textwidth}
  \centering
  \begin{tikzpicture} 
        [round/.style={circle, draw, minimum size=7.5mm},
        square/.style={rectangle, draw, minimum size=7mm}]
        \foreach \x in {1, 2, 3, 4}
           \node (\x) at (0,\x) [round] {$\x$};
        \foreach \x in {1, 2, 3, 4}
            \node (P\x) at (2,\x) [square] {$P_\x$};
        \draw (1) -- (P1);
        \draw (2) -- (P1);
        \draw (3) -- (P2);
        \draw (1) -- (P2);
        \draw[dashed, black, thick, rounded corners] (-0.5, 0.5) rectangle (0.5, 3.5);
        \draw[dashed, black, thick, rounded corners] (1.5, 0.5) rectangle (2.5, 2.5);
        \node at (-0.8, 2) {$S^*$};
        \node at (3, 1.5) {$L(S^*)$};
        \end{tikzpicture}
  \caption{MMS feasibility graph with $|S^*| = 3, |L(S^*)| = 2$ while there exists an agent likes more than one bundle.}
  \label{fig:|S|=3|N(S)|=2B}
\end{subfigure}\qquad \quad
\begin{subfigure}{.25\textwidth}
    \centering
    \begin{tikzpicture}
    [round/.style={circle, draw, minimum size=7.5mm},
    square/.style={rectangle, draw, minimum size=7mm}]
    \foreach \x in {1, 2, 3, 4}
        \node (\x) at (0,\x) [round] {$\x$};
    \foreach \x in {1, 2, 3, 4}
        \node (P\x) at (2,\x) [square] {$P_\x$};
    \draw (2) -- (P1);
    \draw (1) -- (P1);
    \draw (3) -- (P1);
    \draw[dashed, black, thick, rounded corners] (-0.5, 0.5) rectangle (0.5, 3.5);
    \draw[dashed, black, thick, rounded corners] (1.5, 0.5) rectangle (2.5, 1.5);
    \node at (-0.8, 2) {$S^*$};
    \node at (3, 1) {$L(S^*)$};
    \end{tikzpicture}
    \caption{MMS-feasibility graph with $|S^*| = 3, |L(S^*)| = 1$, for which all of agents $\{1,2,3\}$ only like bundle $P_1$.}
    \label{fig:mmsPartitionGraph|S|=3|N(S)|=1}
\end{subfigure}
\caption{MMS feasibility graphs with $|S^*| = 3, |L(S^*)| \leq 2$.}
\end{figure}
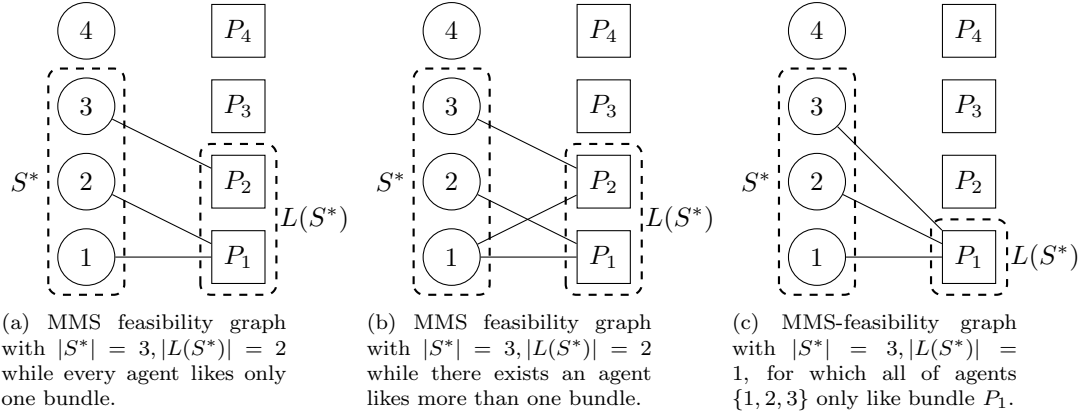

    Next, we move to the case that there exists an agent that likes two bundles (see Figure~\ref{fig:|S|=3|N(S)|=2B}). 
    Without loss of generality, we assume agent $1$ likes two bundles $P_1, P_2$ and $c_1(P_1)\leq c_1(P_2) \leq c_1(P_3) \leq c_1(P_4)$. 
    By allocating bundle $P_4$ to agent $4$, it remains to consider $\cI' = (M', N', \bc')$ with $N' = \{1,2,3\}, M' = M\setminus P_4$ since $c_i(M \setminus P_4) < 3$ for all $i \in \{ 1, 2, 3 \}$. 
    Notice that 
    \begin{equation*}
        c_1(P_1) + c_1(P_3) \leq c_1(P_2) + c_1(P_4) \leq \frac{1}{2} \cdot c_i(M) \leq 2
    \end{equation*}
    and $c_1(P_2) < 1$. 
    By Lemma~\ref{lemma:4/3-partition}, we can compute a $(\frac{4}{3}, 1, 1)$-partition for agent 1.
    Based on the partition, we can find a $\frac{4}{3}$-AMMS allocation by Lemma~\ref{lemma:4/3MMSforThreeAgents}.
\end{proof}

It remains to consider the case that all agents $1, 2, 3$ like the same bundle, say $P_1$ (See Figure~\ref{fig:mmsPartitionGraph|S|=3|N(S)|=1}).
We remark that this case is the most complicated and needs a more careful analysis.
One of the differences compared with the previous analysis is that we not only need the valid reduction property, i.e.,  $c_i(M') \leq 3$ for all $i\in N'$; but also need to specify which bundle is allocated to agent $4$.
To specify such a bundle, we use the \emph{atomic bundle} framework that was first introduced by Feige $et$ $al$.~\cite{conf/wine/FeigeST21}.
More specifically, the \emph{atomic bundle} is the intersection of two bundles, each of them from a different agent's MMS partitions.
Note that in this case, agents $1,2,3$ are symmetric.
In the following, we fix agent $1$ and let $B = \{B_1, B_2, B_3, B_4\}$ be an MMS partition of agent $1$.
We take the intersection of bundles in $B$ and bundles in $P$, i.e., $b_{ij} = B_i \cap P_j$ for all $i, j \in \{ 1, 2, 3, 4\}$. 
We show the atomic bundles as a matrix in Figure~\ref{fig:atomicBundle}.

\begin{figure}[htbp]
    \centering
    \begin{subfigure}{.25\textwidth}
    \centering
    \begin{tikzpicture}
    [square/.style={rectangle, draw, minimum size=7mm}]
    \matrix [nodes={draw, minimum size=8mm}] {
        \node {$b_{11}$}; & \node{$b_{12}$}; & \node{$b_{13}$}; & \node{$b_{14}$};\\
        \node {$b_{21}$}; & \node{$b_{22}$}; & \node{$b_{23}$}; & \node{$b_{24}$};\\
        \node {$b_{31}$}; & \node{$b_{32}$}; & \node{$b_{33}$}; & \node{$b_{34}$};\\
        \node {$b_{41}$}; & \node{$b_{42}$}; & \node{$b_{43}$}; & \node{$b_{44}$};\\
    };
    \node at (-1.2, 2) {$P_1$};
    \node at (-1.2, 2.5) {$\leq 1$}; 
    \node at (-0.4, 2) {$P_2$};
    \node at (-0.4, 2.5) {$>1$};
    \node at (0.4, 2) {$P_3$};
    \node at (0.4, 2.5) {$>1$};
    \node at (1.2, 2) {$P_4$};
    \node at (1.2, 2.5) {$>1$};
    \node at (-2, 1.1) {$B_1$};
    \node at (-2, 0.4) {$B_2$};
    \node at (-2, -0.3) {$B_3$};
    \node at (-2, -1.1) {$B_4$};
    \end{tikzpicture}
    \caption{The atomic bundles as a matrix.}
    \label{fig:atomicBundle}
    \end{subfigure}\qquad
    \begin{subfigure}{.25\textwidth}
    \centering
    \begin{tikzpicture}
    \filldraw[fill = gray!20, draw = none] (-1.6,1.6) rectangle (1.6,-0.8);
    \filldraw[fill = blue!20, draw = none] (-0.8,-0.8) rectangle (0,-1.6);
    [square/.style={rectangle, draw, minimum size=7mm}]
    \matrix [nodes={draw, minimum size=8mm}] {
        \node {$b_{11}$}; & \node{$b_{12}$}; & \node{$b_{13}$}; & \node{$b_{14}$};\\
        \node {$b_{21}$}; & \node{$b_{22}$}; & \node{$b_{23}$}; & \node{$b_{24}$};\\
        \node {$b_{31}$}; & \node{$b_{32}$}; & \node{$b_{33}$}; & \node{$b_{34}$};\\
        \node {$b_{41}$}; & \node{$b_{42}$}; & \node{$b_{43}$}; & \node{$b_{44}$};\\
    };
    \node at (-1.2, 2) {$P_1$};
    \node at (-0.4, 2) {$P_2$};
    \node at (0.4, 2) {$P_3$};
    \node at (1.2, 2) {$P_4$};
    \node at (-2, 1.1) {$B_1$};
    \node at (-2, 0.4) {$B_2$};
    \node at (-2, -0.3) {$B_3$};
    \node at (-2, -1.1) {$B_4$};
    \end{tikzpicture}
    \caption{The gray bundles cost at most $3$ to agent $1$.}
    \label{fig:atomicBundle2}
\end{subfigure}\qquad
\begin{subfigure}{.25\textwidth}
    \centering
    \begin{tikzpicture}
    \filldraw[fill = gray!20, draw = none] (-1.6,1.6) rectangle (1.6,-0.8);
    \filldraw[fill = blue!20, draw = none] (-0.8,-0.8) rectangle (0,-1.6);
    \filldraw[fill = white, draw = none] (-0.8,1.6) rectangle (0,-0.8);
    [square/.style={rectangle, draw, minimum size=7mm}]
    \matrix [nodes={draw, minimum size=8mm}] {
        \node {$b_{11}$}; & \node{$b_{12}$}; & \node{$b_{13}$}; & \node{$b_{14}$};\\
        \node {$b_{21}$}; & \node{$b_{22}$}; & \node{$b_{23}$}; & \node{$b_{24}$};\\
        \node {$b_{31}$}; & \node{$b_{32}$}; & \node{$b_{33}$}; & \node{$b_{34}$};\\
        \node {$b_{41}$}; & \node{$b_{42}$}; & \node{$b_{43}$}; & \node{$b_{44}$};\\
    };
    \node at (-1.2, 2) {$P_1$};
    \node at (-0.4, 2) {$P_2$};
    \node at (-0.4, 2.5) {$>1$};
    \node at (0.4, 2) {$P_3$};
    \node at (1.2, 2) {$P_4$};
    \node at (-2, 1.1) {$B_1$};
    \node at (-2, 0.4) {$B_2$};
    \node at (-2, -0.3) {$B_3$};
    \node at (-2, -1.1) {$B_4$};
    \end{tikzpicture}
    \caption{The gray bundles cost at most $2$ to agent $1$}
    \label{fig:atomicBundle3}
\end{subfigure}
    \caption{Each atomic bundle is an intersection of a bundle in $B$ (an MMS partition of agent $1$) and a bundle in $P$ (an MMS partition of agent $4$). We remark that $P_1$ is the bundle that agent $1$ likes while she dislikes all other bundles $P_2, P_3, P_4$.}
\end{figure}

\begin{replemma} {lemma:|S|=3|N(S)|=1}
    When $|S^*| = 3$ and $|L(S^*)| = 1$, there exists a $\frac{4}{3}$-AMMS allocation.
\end{replemma}

\begin{proof}
    We first consider the case that there exists an atomic bundle $b_{ij}$, such that $c_1(P_j \setminus b_{ij}) > 1$ for some $i \in \{ 1, 2, 3, 4 \}, j \in \{2, 3, 4\}$.
    We show that in this case, Lemma~\ref{lemma:4/3-partition} can be applied to find a $\frac{4}{3}$-AMMS allocation.
    Note that $b_{ij}$ appears in $P_2, P_3, P_4$ symmetrically. We assume w.l.o.g. that $b_{42} \subseteq P_2$ is such an atomic bundle (see Figure~\ref{fig:atomicBundle2}).
    By assigning $P_2$ to agent $4$, it remains to consider the reduced instance $\cI' = (M', N', \bc')$ while $M' = M\setminus P_2$ and $N' = \{1,2,3\}$.
    Let $T = b_{41}\cup b_{43}\cup b_{44}$, we have $c_1(T) \leq 1$ since $T\subseteq B_4$ while $c_1(B_4) \leq 1$.
    Moreover, since $c_1(P_2 \setminus b_{42}) > 1$, we have
    \begin{equation*}
        c_1(M'\setminus T) = c_1(B_1 \cup B_2 \cup B_3 \setminus P_2) < 2.
    \end{equation*}
    See Figure~\ref{fig:atomicBundle3} for an illustration.
    By applying Lemma~\ref{lemma:4/3-partition}, we can compute a $(\frac{4}{3}, 1, 1)$-partition for agent 1.
    Based on the partition, we can find a $\frac{4}{3}$-AMMS allocation by Lemma~\ref{lemma:4/3MMSforThreeAgents}.

    It remains to consider that for any $b_{ij}$ such that $i\in \{1,2,3,4\}, j\in \{2,3,4\}$, we have $c_1(P_j \setminus b_{ij}) \leq 1$. 
    We assume w.l.o.g. that $c_1(P_1) \leq c_1(P_2) \leq c_1(P_3) \leq c_1(P_4)$. 
    By assigning $P_4$ to agent $4$, it remains to consider the reduced instance $\cI' = (M', N', \bc')$ while $M' = M\setminus P_4$ and $N' = \{1,2,3\}$.
    Again, we show that there exists a $(\frac{4}{3}, 1, 1)$-partition for agent $1$.
    Let $c_1(P_1) = \beta$. 
    We have 
    \begin{equation*}
        c_1(P_2) + c_1(P_3) = c_1(M) - c_1(P_1)-c_1(P_4) \leq \frac{2(c_1(M) - \beta)}{3} \leq \frac{2(4 - \beta)}{3},
    \end{equation*}
    where the first  inequality holds because $c_1(P_1) + c_1(P_4) \geq \frac{c_1(M) + 2\beta}{3}$ and the second inequality holds because $c_1(M) \leq 4$.
    Let $b_2$ and $b_3$ be the atomic bundle with the minimum cost in $P_2$ and $P_3$ respectively, i.e., $b_2 = \argmin \{c_1(b): b \in \{b_{12}, b_{22}, b_{32}, b_{42}\} \}$, and $b_3 = \argmin \{c_1(b): b \in \{b_{13}, b_{23}, b_{33}, b_{43}\} \}$.

    \begin{itemize}
        \item Suppose $\beta \leq \frac{4}{5}$.
        We have 
        $$c_1(b_2) + c_1(b_3) \leq \frac{1}{4} \cdot \left(c_1(P_2) + c_1(P_3)\right) \leq \frac{1}{4} \cdot \frac{2(4 - \beta)}{3} = \frac{4 - \beta}{6},$$
        which implies that $c_1(P_1) + c_1(b_2) + c_1(b_3) \leq \beta + \frac{4 - \beta}{6} = \frac{4 + 5\beta}{6} \leq \frac{4}{3}$.
        Recall that $c_1(P_2\setminus b_2) \leq 1$ and $c_1(P_3 \setminus b_3) \leq 1$.
        Hence $\{P_1 \cup b_2 \cup b_3, P_2 \setminus b_2, P_3 \setminus b_3\}$ is a $(\frac{4}{3}, 1, 1)$-partition for agent $1$.

        \item Suppose $\beta > \frac{4}{5}$. 
        We have
        \begin{equation*}
            c_1(b_3) \leq \frac{1}{4} \cdot c_1(P_3) \leq \frac{1}{4} \cdot \frac{4 - \beta - c_1(P_2)}{2}.
        \end{equation*}
        Note that $c_1(P_2) \leq c_1(P_3)$ while $c_1(P_2) + c_1(P_3) \leq  \frac{2(4 - \beta)}{3}$.
        Hence we have $c_1(P_2) \leq \frac{4 - \beta}{3}$ and
        \begin{equation*}
            c_1(P_2) + c_1(b_3) \leq \frac{4 - \beta + 7c_1(P_2)}{8} \leq \frac{5}{3} - \frac{5}{12} \beta \leq \frac{4}{3}.
        \end{equation*}
        Recall that $c_1(P_3 \setminus b_3) \leq 1$ and $c_1(P_1) < 1$. 
        Hence $\{P_1, P_2 \cup b_3, P_3\setminus b_3\}$ is a $(\frac{4}{3}, 1, 1)$-partition for agent $1$.
        \qedhere
    \end{itemize}
\end{proof}

For completeness, we summarize our complete algorithm as Algorithm~\ref{alg:four-agents} in Appendix~\ref{appendix:complete-algo}.

\section{\texorpdfstring{$\frac{(n+1)^2}{4n}$}{}-AMMS for a General Number of Agents}

In this section, we consider the instances with a general number of agents. Let $\gamma = \frac{(n+1)^2}{4n}$. 
Given any instance $\cI=(M, N, \bc)$, we show that $\gamma$-AMMS allocations are guaranteed to exist, by proposing Algorithm~\ref{alg:n-agents}.
The algorithm runs in iterations while during each iteration it either returns a $\gamma$-AMMS allocation (and terminates); or reduces the input instance (which is a reduced instance) to a new reduced instance with a smaller size.
We remark that the original instance $\cI = (M, N, \bc)$ can also be viewed as a reduced instance since $c_i(M) \leq n = |N|$ for all $i\in N$.

At the beginning of each iteration, given a reduced instance $\cI'= (M', N', \bc')$ with $|N'| = k$, we find a partition $P = \{P_1, \ldots, P_k\}$ of item $M'$ that is a $(\gamma, 1, \ldots, 1)$-partition for some agent $i \in N'$.
Specifically, we assume that $c_i(P_j)\le 1$ for $j\in [k-1]$ and $c_i(P_k)\le \gamma$.
Based on the partition $P$ we construct an MMS-feasibility graph $G'$ excluding agent $i$ and bundle $P_k$.
We remark that if there exists a perfect matching in $G'$, then we can directly find a $\gamma$-AMMS allocation, by allocating $P_k$ to agent $i$ and the others following the perfect matching.
If there is no perfect matching, we claim that we can do a valid reduction on $\cI'$.

\begin{algorithm}[htbp]
        \caption{$\gamma$-AMMS-$n$-Agents$(M,N,\bc)$} \label{alg:n-agents}
        \KwIn{An original instance $\cI= (M, N, \bc)$}

        $M'\gets M$; $N'\gets N$; $\bc'\gets \bc $\;
        Let $\cI' \gets (M',N',\bc')$ and $k\gets |N'|$\;

        \While{$|N'|\ne \emptyset$}{
            Pick an arbitrary agent $i \in N'$\;
            Find a $(\gamma,1,\dots,1)$-partition $P$ for agent $i$; \hfill \tcp{Use Algorithm~\ref{alg:MP} or Algorithm~\ref{alg:CBF}}
    
            Construct an MMS-feasibility graph $G'$ between $N'\setminus \{i\}$ and $P \setminus \{P_k\}$\;
            \If{there exists a perfect matching $\mathcal{M}$ in $G'$} {
                Allocate bundles to the corresponding agents in $\mathcal{M}$ and allocate $P_k$ to agent $i$\;
            \Return $(X_1,\dots,X_n)$\;
            }
            $(\hat{M},\hat{N},\hat{\bc}), \{X_i\}_{i\in N' \setminus \hat{N}} \gets$ Valid-Reduction($G,i$); \hfill \tcp{Algorithm~\ref{alg:reduced}}
            Update $M' \gets \hat{M}$, $N' \gets \hat{N}$, $\bc' \gets \hat{\bc}$\;
            Let $\cI'\gets (M',N',\bc')$ and $k\gets |N'|$\;
        }
        \KwOut{An allocation $\bX = (X_1, \dots, X_n)$}
\end{algorithm}

\begin{theorem} \label{thm:general-number-agents}
    For any instance with $n\geq 5$ agents, there exist $\gamma$-AMMS allocations.
\end{theorem}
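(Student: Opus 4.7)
The plan is to analyze Algorithm~\ref{alg:n-agents} by showing that every iteration either returns a $\gamma$-AMMS allocation or performs a reduction that strictly decreases $|N'|$ while preserving the reduced-instance invariant $c_j(M') \leq |N'|$ for every $j \in N'$. Since $|N'|$ can drop at most $n$ times, the algorithm terminates, and when it halts through the perfect-matching branch it returns the desired $\gamma$-AMMS allocation. Three ingredients are required: (a) existence of a $(\gamma,1,\ldots,1)$-partition for any chosen pivot agent $i$ in the current reduced instance; (b) correctness of the termination branch (perfect matching in $G'$); and (c) correctness and validity of the reduction branch.

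The heart of the proof is (a). I would construct the required partition from agent $i$'s MMS partition of the original instance. Let $\{Q_1, \ldots, Q_n\}$ be an MMS partition of $M$ for agent $i$, so $c_i(Q_j) \leq 1$ for every $j$. Set $Q'_j = Q_j \cap M'$, reindex so that $c_i(Q'_1) \geq \cdots \geq c_i(Q'_n)$, and define $P_l = Q'_l$ for $l \in [k-1]$ together with $P_k = \bigcup_{l=k}^{n} Q'_l$. Each $c_i(P_l) \leq c_i(Q_l) \leq 1$ for $l < k$, while $P_k$ aggregates the $n-k+1$ smallest pieces. Using $c_i(M') \leq k$ from the reduced-instance invariant and averaging, the sum of the smallest $n-k+1$ pieces is at most $\frac{n-k+1}{n} \cdot c_i(M') \leq \frac{k(n-k+1)}{n}$; applying AM-GM on $k$ and $n-k+1$ then gives
\begin{equation*}
c_i(P_k) \;\leq\; \frac{k(n-k+1)}{n} \;\leq\; \frac{1}{n}\left(\frac{n+1}{2}\right)^{\!2} \;=\; \gamma.
\end{equation*}
Hence $\{P_1, \ldots, P_k\}$ is a $(\gamma,1,\ldots,1)$-partition for agent $i$, independent of which specific reduced instance we are in.

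For (b), if $G'$ admits a perfect matching, every agent in $N'\setminus\{i\}$ receives an MMS-feasible bundle drawn from $P\setminus\{P_k\}$ while $i$ receives $P_k$ of cost at most $\gamma$; combined with the MMS-feasible bundles handed out to agents who were removed during earlier reductions, this yields a $\gamma$-AMMS allocation in which only $i$ can exceed her MMS. For (c), when $G'$ has no perfect matching I would apply Hall's theorem to the full MMS-feasibility graph $G$ on $N'\cup P$, in the spirit of Observations~\ref{observation:perfectMatching} and~\ref{observation:Valid-Reduction}. Either $G$ itself has a perfect matching—in which case we output an all-MMS allocation immediately—or there is a maximum $S^{**}\subseteq N'$ with $|L_G(S^{**})| < |S^{**}|$ and, by the Hall-type argument of Observation~\ref{observation:perfect-matching-reduce}, a perfect matching between $N'\setminus S^{**}$ and $|N'\setminus S^{**}|$ bundles outside $L_G(S^{**})$. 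Assigning those bundles to $N'\setminus S^{**}$ delivers MMS-feasible shares to those agents; for every remaining agent $j\in S^{**}$, each removed bundle has cost strictly greater than $1$ (being outside $L_G(S^{**})$), so $c_j(\hat{M}) < |\hat{N}| = |S^{**}|$, preserving the reduced-instance invariant.

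The main obstacle I anticipate is ensuring the valid-reduction invariant holds for the pivot $i$ specifically; after stripping out bundles matched by $G'$ alone, one has no direct control over $c_i(\hat{M})$. My approach circumvents this by executing the reduction through the full graph $G$ rather than $G'$: whenever $i$ survives the reduction it must lie inside the tight set $S^{**}$, which forces every removed bundle to have cost strictly greater than $1$ for $i$ as well, and hence $c_i(\hat{M}) < |\hat{N}|$. Termination then follows since $|N'|$ strictly decreases in every non-terminal iteration, and across the entire execution only the pivot of the final iteration receives a bundle whose cost may exceed her MMS—bounded above by $\gamma$—so the composite output is indeed $\gamma$-AMMS.
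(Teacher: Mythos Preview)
Your construction in part~(a) is correct and in fact cleaner than the paper's: the bound $c_i(P_k)\le \frac{k(n-k+1)}{n}\le\frac{1}{n}\bigl(\frac{n+1}{2}\bigr)^2=\gamma$ via AM--GM holds for \emph{every} $k\in[n]$, so Partition-Merging alone suffices. The paper (Lemmas~\ref{lemma:MP} and~\ref{lemma:CBF}) splits into the regimes $k>\lfloor\frac{n+1}{2}\rfloor$ and $k\le\lfloor\frac{n+1}{2}\rfloor$ and introduces a second procedure (Capped-Bag-Filling) for the latter; your observation shows this case split is unnecessary.

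Part~(c), however, has a genuine gap. Your claim that ``$|N'|$ strictly decreases in every non-terminal iteration'' can fail once you switch from $G'$ to the full graph $G$: it may happen that the maximum Hall-violating set is $S^{**}=N'$ itself, so $N'\setminus S^{**}=\emptyset$ and the reduction removes nothing. Indeed, whenever $i\in S^{**}$ one has $L_G(S^{**})\supseteq L_G(\{i\})\supseteq\{P_1,\dots,P_{k-1}\}$, hence $|S^{**}|>|L_G(S^{**})|\ge k-1$ forces $S^{**}=N'$. This occurs exactly when no agent in $N'$ likes $P_k$, which is entirely possible since $c_i(P_k)$ may exceed $1$. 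For a concrete instance take $k=3$, $c_i(P_1)=c_i(P_2)=\tfrac34$, $c_i(P_3)=\tfrac32$, and let the two remaining agents each like only $P_1$ with $c_j(P_2),c_j(P_3)>1$: then $G'$ has no perfect matching, $G$ has no perfect matching, and $S^{**}=N'$.

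The paper's Algorithm~\ref{alg:reduced} avoids this by never letting the pivot $i$ survive. It works inside $G'$, finds the maximum tight set $S\subseteq N'\setminus\{i\}$, matches $N'\setminus(S\cup\{i\})$ into $\{P_1,\dots,P_{k-1}\}\setminus L(S)$, and then hands $i$ one of the $|S|-|L(S)|\ge 1$ leftover bundles in $\{P_1,\dots,P_{k-1}\}\setminus L(S)$. That bundle is MMS-feasible for $i$ (it lies among the first $k-1$ parts) and has cost exceeding $1$ for every agent in $S$ (it lies outside $L(S)$), so the invariant passes to $\hat N=S\subseteq N'\setminus\{i\}$, which has size at most $k-1$. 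Thus your worry about the invariant for the pivot is moot---$i$ is always paid off with an MMS bundle and exits---and strict progress is guaranteed. Replacing your full-$G$ reduction with this step closes the gap.
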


To prove Theorem~\ref{thm:general-number-agents}, we first show that the partition $P = \{P_1, \ldots, P_k\}$ we output during each iteration, is a $(\gamma, 1, \ldots, 1)$-partition for some agent $i\in N'$.
We remark that at the beginning of any iteration, we have a reduced instance $\cI' = (M', N', \bc')$ as input.

\begin{lemma} \label{lemma:gamma-partition}
   Given a reduced instance $\cI' = (M',N',\bc')$, we can compute a partition $P= \{P_1,\dots, P_k\}$ that is a $(\gamma,1,\dots,1)$-partition for some agent $i\in N'$.
\end{lemma}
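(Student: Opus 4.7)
The plan is to prove this lemma by combining an MMS-partition shortcut (Algorithm~MP) with a greedy bag-filling procedure (Algorithm~CBF), applied to an arbitrary agent $i\in N'$. First, if $\MMS_i(M',k)\le 1$, I return an MMS partition of $M'$ with respect to agent $i$ and $k$ parts; since every bundle has cost at most $1$, this is already a $(1,\ldots,1)$-partition, hence a $(\gamma,1,\ldots,1)$-partition. Otherwise, I invoke Algorithm~CBF: sort $M'$ in non-increasing order of $c_i$, designate $k-1$ regular bundles $P_1,\ldots,P_{k-1}$ and an overflow bundle $P_k$, and process items in order, placing each item into any regular bundle whose running cost stays at most $1$, or into $P_k$ otherwise. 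By construction $c_i(P_j)\le 1$ for all $j\in [k-1]$, so the entire task reduces to bounding $c_i(P_k)\le \gamma$.

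For the bound on $c_i(P_k)$, I focus on the first item $e^*$ placed in $P_k$ and set $\alpha=c_i(e^*)$. At that moment, every regular bundle $P_j$ must satisfy $c_i(P_j)>1-\alpha$ (otherwise $e^*$ would fit into it) and must be non-empty (otherwise $e^*$ would fit into the empty bundle, since $c_i(e^*)\le 1$); furthermore, items arrive in non-increasing order, so each regular bundle contains at least one item of cost at least $\alpha$, which bounds the number of items of cost $\ge\alpha$ per regular bundle by $q:=\lfloor 1/\alpha\rfloor$. The reduced-instance invariant $c_i(M')\le k$ together with the lower bounds $c_i(P_j)>1-\alpha$ for $j\in[k-1]$ gives the preliminary estimate $c_i(P_k)<1+(k-1)\alpha$, which is too loose and must be sharpened. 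The refinement proceeds by a case split on $\alpha\in(1/(q+1),1/q]$: within each range, the regular bundles jointly absorb at least $(k-1)q$ items of cost $\ge\alpha$, leaving at most $m-(k-1)q$ such items in $P_k$, where $m$ is the number of items of cost $\ge\alpha$ in $M'$. The normalization $\MMS_i(M,n)=1$ then forces $n\ge m/q$ (since in any MMS partition of the original instance each bundle holds at most $q$ items of cost $\ge\alpha$), which couples the size $m$ to the parameter $\gamma=(n+1)^2/(4n)$.

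The main obstacle I anticipate is the regime $q=1$, i.e.\ $\alpha$ just above $1/2$, where the overflow bundle can accumulate up to $k$ items each of cost slightly more than $1/2$. Here the analysis is essentially tight: by maximizing $(m-(k-1))\alpha$ subject to $m\alpha<k$ and $\alpha>1/2$, one obtains the overflow bound $c_i(P_k)<k^2/(2k-1)$, and the identity $(n+1)^2/(4n)=k^2/(2k-1)$ at $n=2k-1$ shows that the guaranteed $\gamma$ matches this bound exactly when $n=2k-1$ (and exceeds it strictly for larger $n$, as required since $n\ge m\ge 2k-1$ in this regime). For $q\ge 2$, direct calculation shows the overflow is a constant factor below $\gamma$, so the analysis closes comfortably; the precise value $\gamma=(n+1)^2/(4n)$ arises as the maximum over $q$ of the per-regime bounds, with the worst case attained in the $q=1$ limit.
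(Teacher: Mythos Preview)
Your plan diverges substantially from the paper's argument and contains genuine gaps.

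\textbf{What the paper does.} The paper splits on the \emph{size} of the reduced instance: if $k=|N'|>\lfloor (n+1)/2\rfloor$ it uses Partition-Merging, which takes agent~$i$'s MMS partition of the \emph{original} item set $M$, restricts each part to $M'$, and merges the $n-k+1$ cheapest restricted parts; an averaging argument gives the merged cost at most $\frac{(n-k+1)k}{n}\le\gamma$. If $k\le\lfloor(n+1)/2\rfloor$ it runs Capped-Bag-Filling with $k$ capped bins and proves the overflow bundle has cost at most $\frac{k^2}{2k-1}$, which is $\le\gamma$ precisely because $k\le(n+1)/2$. Both ingredients are essential: the original $n$-way MMS partition handles large $k$, and the numerical inequality $k^2/(2k-1)\le\gamma$ handles small $k$.

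\textbf{Where your argument breaks.} Your dichotomy is on $\MMS_i(M',k)\le 1$ versus $>1$, not on $k$, and in the second branch you never establish any relation between $k$ and $n$. You attempt to recover one by writing ``$n\ge m\ge 2k-1$'', but $m\ge 2k-1$ is not a constraint that holds---it is merely the value of $m$ at which your maximization $(m-(k-1))\alpha$ is largest. Nothing forces $m$ (the number of items of cost $\ge\alpha$ in $M'$) to be at least $2k-1$; one can have $\MMS_i(M',k)>1$ with $m$ as small as $k$ and with $k$ close to $n$, so the desired inequality $k^2/(2k-1)\le\gamma$ simply need not hold in your CBF branch. Two further errors compound this: the assertion that ``the regular bundles jointly absorb at \emph{least} $(k-1)q$ items of cost $\ge\alpha$'' has the wrong direction (a regular bundle of cost $\le 1$ contains at \emph{most} $q$ such items; for $q\ge 2$ a single expensive item can block a bundle after just one insertion), and the quantity $(m-(k-1))\alpha$ bounds only the large-item contribution to $P_k$ while ignoring the small items that also overflow. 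In short, the missing idea is the use of the original $n$-way MMS partition (the paper's Partition-Merging) to cover the regime $k>\lfloor(n+1)/2\rfloor$; without it, no CBF-style argument on $M'$ alone yields $\gamma$.
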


Given such a partition $P$, we construct an MMS-feasibility graph $G$.
We mainly focus on the subgraph $G'$ with agents $N' \setminus \{i\}$, the set of bundles $P \setminus \{P_k\}$, and the edges between them (see Figure~\ref{fig:modified-MMS-feasibility-graph}).
If there exists a perfect matching $\mathcal{M}$ in the graph $G'$, by allocating $P_k$ to agent $i$ and the other bundles following the matching, we have a $\gamma$-AMMS allocation.
It remains to consider that there is no perfect matching in $G'$, in which case we run Algorithm~\ref{alg:reduced} to reduce the instance.

    \begin{figure}[htbp]
        \centering
        \begin{subfigure}{0.45\textwidth}
        \centering
        \begin{tikzpicture}
        [round/.style={circle, draw, minimum size=7.5mm},
        square/.style={rectangle, draw, minimum size=7mm},
        dot/.style={rotate=90, font=\LARGE}]
        
        \filldraw[fill=gray!10, draw=black, dashed, thick, rounded corners] (-0.5,5.5) rectangle (2.5, 4.5);

        \foreach \x in {1,2}
            \node (\x) at (0,\x) [round] {$\x$};
        \node [dot] at (0, 3) {$\cdots$};
        \node (i) at (0, 5) [round] {$i$};
        \node (k-1) at (0, 4) [round] {};
        \node at (0, 4) {\scriptsize$k-1$};
    
        \foreach \x in {1,2}
            \node (P\x) at (2,\x) [square] {$P_\x$};
        \node [dot] at (2, 3) {$\cdots$};\
        \node (Pk-1) at (2, 4) [square] {};
        \node at (2,4) {\scriptsize$P_{k-1}$};
        \node (Pk) at (2, 5) [square] {$P_k$};
        
        \draw (1) -- (P2);
        \draw (2) -- (P2);
        \draw (k-1) -- (P2);
        \draw (k-1) -- (Pk-1);
        \draw[dashed] (i) -- (Pk);

        \draw[dashed, black, thick, rounded corners] (-0.5,0.5) rectangle (2.5, 4.5);
        \node at (3, 2.5) {$G'$};
        \end{tikzpicture}
        \caption{The MMS-feasibility graph $G'$ is a subgraph of $G$ that excludes the agent $i$, the bundle $P_k$ and all the edges connected to them, e.g., $G' = (N\setminus \{i\} \cup P\setminus \{P_k\}, E)$.
        The dashed edge between $i$ and $P_k$ in the figure denotes that $c_i(P_k) \leq \gamma$.}
        \label{fig:modified-MMS-feasibility-graph}   
        \end{subfigure}\qquad
        \begin{subfigure}{.45\textwidth}
        \centering
         \begin{tikzpicture}
        [round/.style={circle, draw, minimum size=7.5mm},
        square/.style={rectangle, draw, minimum size=7mm},
        dot/.style={rotate=90, font=\LARGE}]
    
        \filldraw[fill=gray!10, draw=black, dashed, thick, rounded corners] (-0.5,5.5) rectangle (2.5, 4.5);
            
        \foreach \x in {1,2}
            \node (\x) at (0,\x) [round] {$\x$};
        \node [dot] at (0, 3) {$\cdots$};
        \node (i) at (0, 5) [round] {$i$};
        \node (k-1) at (0, 4) [round] {};
        \node at (0, 4) {\scriptsize$k-1$};
        
        \foreach \x in {1,2}
            \node (P\x) at (2,\x) [square] {$P_\x$};
        \node [dot] at (2, 3) {$\cdots$};\
        \node (Pk-1) at (2, 4) [square] {};
        \node at (2,4) {\scriptsize$P_{k-1}$};
        \node (Pk) at (2, 5) [square] {$P_k$};
        
        \draw (1) -- (P2);
        \draw (2) -- (P2);
        \draw (k-1) -- (P2);
        \draw (k-1) -- (Pk-1);
        \draw[dashed] (i) -- (Pk);
    
        \draw[dashed, black, thick, rounded corners] (-0.5,0.5) rectangle (0.5, 2.5);
        \draw[dashed, black, thick, rounded corners] (1.5,1.5) rectangle (2.5, 2.5);
        \node at (-0.8, 2) {$S$};
        \node at (3, 2) {$L(S)$};
        \end{tikzpicture}
        \caption{An illustration for $S$ and $L(S)$. We can let agents in $N'\setminus \{1,2,i\}$ pick bundles following the perfect matching $\mathcal{M}$ and agent $i$ picks one from the remaining bundles in $P'\setminus L(S)$. It remains to consider a reduced instance with only agents $\{1,2\}$.}
        \label{fig:weak mms partition}
    \end{subfigure}
    \caption{Illustrations for subgraph $G'$ and $S$ with $L(S)$.}
    \end{figure}
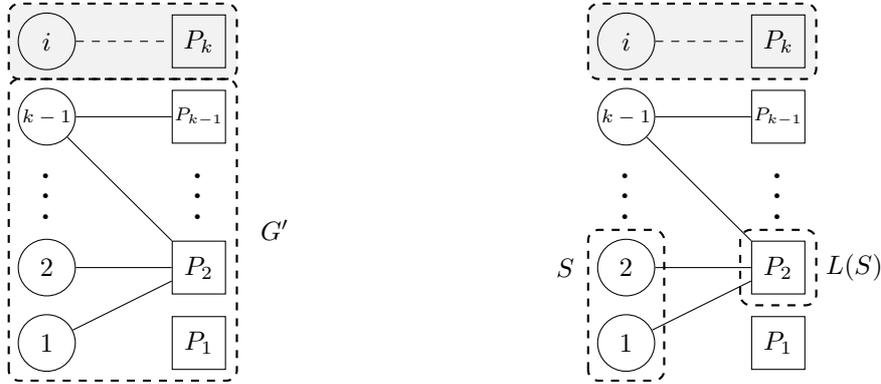

\begin{algorithm}[h]
        \caption{Valid-Reduction($G', i$)} \label{alg:reduced}
        \KwIn{An MMS-feasibility graph $G'=(N'\setminus \{i\} \cup P\setminus \{P_k\},E)$ and agent $i$} 
        
        Find a maximum subset $S\subseteq N'\setminus \{i\}$ in $G'$ such that $|S|>|L(S)|$\;
        \If{$S \neq N'\setminus \{i\}$}{
        Find a perfect matching $\mathcal{M}$ between $N' \setminus (S \cup \{i\})$ and $P \setminus (L(S) \cup \{P_k\}) $\;
        Let $N_{\mathcal{M}}$ be the set of agents matched in $\mathcal{M}$ and $P_{\mathcal{M}}$ be set of bundles matched in $\mathcal{M}$\;
        
        Allocate bundles to agents in $N_{\mathcal{M}}$ following the perfect matching $\mathcal{M}$.
        }
        Let agent $i$ pick any one of bundles in $P \setminus (L(S)\cup P_{\mathcal{M}} \cup \{P_k\})$ (use $X_i$ to denote this bundle)\;
        $\hat{M} \gets M' \setminus (P_{\mathcal{M}}\cup X_i)$\;
        $\hat{N} \gets N' \setminus (N_{\mathcal{M}}\cup i)$\;
        \KwOut{An instance $\hat{\cI} = (\hat{M}, \hat{N}, \hat{\bc})$ and $\{X_i\}_{i\in N' \setminus \hat{N}}$}
\end{algorithm}

\begin{lemma} \label{lemma:whether-mms-or-reduce}
    If there is no perfect matching in the graph $G'$, Algorithm~\ref{alg:reduced} can reduce the instance $\cI'$ to a new reduced instance $\hat{\cI}$ with a smaller size.
\end{lemma}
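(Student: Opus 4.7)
The plan is to establish four things about Algorithm~\ref{alg:reduced}: (i) the maximal subset $S$ is well-defined and the inner perfect matching $\mathcal{M}$ between $N' \setminus (S \cup \{i\})$ and $P \setminus (L(S) \cup \{P_k\})$ exists; (ii) the pool from which agent $i$ picks $X_i$ is nonempty; (iii) the output instance $\hat{\cI}$ satisfies the valid-reduction condition $c_j(\hat{M}) \leq |\hat{N}|$ for every $j \in \hat{N}$; and (iv) $|\hat{N}| < |N'|$, so that the instance size strictly decreases.

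For (i), the existence of some $S \subseteq N'\setminus\{i\}$ with $|S| > |L(S)|$ is immediate from Hall's theorem (Observation~\ref{observation:perfectMatching}) applied to $G'$, since by assumption $G'$ has no perfect matching. For $\mathcal{M}$, I would transplant the argument of Observation~\ref{observation:perfect-matching-reduce} to the subgraph $G'$: if some $S' \subseteq (N'\setminus\{i\}) \setminus S$ violated Hall's condition inside $G'$, then $|L(S \cup S')| \leq |L(S)| + |L(S')| < |S| + |S'| = |S \cup S'|$ would contradict the maximality of $S$. Hence Hall's condition holds on $(N'\setminus\{i\}) \setminus S$ in $G'$, so $\mathcal{M}$ exists. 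When $S = N'\setminus\{i\}$ the inner \textbf{if} is skipped entirely and this verification is vacuous.

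For (ii), I would use a simple counting argument: in $G'$ there are $k-1$ bundles, of which $L(S)$ removes $|L(S)|$ and $\mathcal{M}$ consumes $|N' \setminus (S \cup \{i\})| = k - 1 - |S|$ more; the residual pool $P \setminus (L(S) \cup P_\mathcal{M} \cup \{P_k\})$ therefore has size $(k-1) - |L(S)| - (k-1-|S|) = |S| - |L(S)| \geq 1$. So agent $i$ always finds $X_i$. For (iii), note $\hat{N} = S$, and every removed bundle (those in $P_\mathcal{M}$ together with $X_i$) lies in $P \setminus L(S)$, so for each $j \in S$ the bundle is not MMS-feasible, meaning its $c_j$-cost strictly exceeds $1$. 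Exactly $k - |S|$ bundles are removed, so using $c_j(M') \leq k$ (valid-reduction hypothesis on the input),
\begin{equation*}
    c_j(\hat{M}) \;=\; c_j(M') - c_j\bigl(P_\mathcal{M} \cup X_i\bigr) \;<\; k - (k - |S|) \;=\; |S| \;=\; |\hat{N}|,
\end{equation*}
which is exactly the valid-reduction condition.

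Finally (iv) is immediate: $|\hat{N}| = |S| \leq k - 1 < k = |N'|$. The only subtlety I anticipate is bookkeeping in the boundary case $S = N'\setminus\{i\}$, where $\mathcal{M}$ is empty and only $X_i$ is removed; the same counting and cost argument go through unchanged, since $|P \setminus (L(S)\cup\{P_k\})| = (k-1) - |L(S)| \geq 1$ and $c_j(\hat M) = c_j(M') - c_j(X_i) < k - 1 = |\hat N|$ for each $j \in S$. Nothing in the argument requires the particular structure of the $(\gamma,1,\ldots,1)$-partition, so the hard work of the theorem is pushed entirely into Lemma~\ref{lemma:gamma-partition}; Lemma~\ref{lemma:whether-mms-or-reduce} itself is essentially a Hall-plus-counting verification.
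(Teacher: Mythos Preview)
Your proposal is correct and follows essentially the same route as the paper: Hall's theorem to get $S$, Observation~\ref{observation:perfect-matching-reduce} (transplanted to $G'$) to get the inner matching $\mathcal{M}$, the counting $(k-1)-|L(S)|-(k-1-|S|)=|S|-|L(S)|\ge 1$ to ensure $X_i$ exists, and the observation that every removed bundle lies outside $L(S)$ (hence costs each $j\in S$ more than $1$) to verify the valid-reduction inequality. One tiny remark: when you write ``lies in $P\setminus L(S)$'' you are implicitly also using that none of the removed bundles is $P_k$ (true by construction), and your closing comment that the $(\gamma,1,\dots,1)$ structure is irrelevant is accurate for the lemma as stated, though the fact $c_i(X_i)\le 1$ is still needed downstream in Theorem~\ref{thm:general-number-agents}.
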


\begin{proof}
    Let $\hat{N} = N'\setminus \{i\}$  and $\hat{P} = P\setminus \{P_k\}$.\;
    If there exists no perfect matching between $\hat{N}$ and $\hat{P}$, according to Hall's theorem, we can find a maximum size subset $S\subseteq \hat{N}$ such that $|S|>|L(S)|$.
    By Observation~\ref{observation:perfect-matching-reduce}, if $\hat{N}\setminus S\ne \emptyset$, there exists a perfect matching $\mathcal{M}$ between $\hat{N}\setminus S$ and $\hat{P}\setminus L(S)$. We use $N_{\mathcal{M}}$ and $P_{\mathcal{M}}$ to denote the set of agents and set of bundles matched in $\mathcal{M}$ respectively.
    We allocate bundles in $P_{\mathcal{M}}$ to agents in $N_{\mathcal{M}}$ following the matching $\mathcal{M}$.
    Note that we have 
    \begin{equation*}
        |\hat{P}|- |L(S)|-|P_{\mathcal{M}}|=|\hat{P}|- |L(S)|-|\hat{N}\setminus S|=|S|-|L(S)| \ge 1,
    \end{equation*}
   which implies that we can allocate one of the bundles in $\hat{P}\setminus (L(S)\cup P_{\mathcal{M}})$ to agent $i$. 
    Recall that all bundles in $\hat{P}$ are MMS-feasible for agent $i$, and the bundle received by agent $i$ is MMS-feasible for her and not MMS-feasible for all agents in set $S$. 

    Since the bundles allocated in Algorithm~\ref{alg:reduced} are not MMS-feasible for agents in $S$, the proportional share of agents in $S$ will not increase. 
    On the other hand, any agent in $N' \setminus S$ receives a bundle in Algorithm~\ref{alg:reduced} while the bundle costs at most $1$ for her.
    Hence it remains to only consider the new reduced instance with only agents in $S$.
    Moreover, the new reduced instance has a strictly smaller size since $|N'|-|S|\ge 1$.
\end{proof}

\begin{proofof}{Theorem~\ref{thm:general-number-agents}} 
Following Lemma~\ref{lemma:gamma-partition}, given any reduced instance $\cI'$ with $|N'| = k$, for any agent $i \in N'$, there exists a $(\gamma,1,\dots,1)$-partition of size $k$ for agent $i$.

    According to Lemma \ref{lemma:whether-mms-or-reduce}, we can either get a $\gamma$-AMMS allocation on instance $\cI'$ or run Algorithm \ref{alg:reduced} to get a smaller reduced instance $\cI'$. If we get a $\gamma$-AMMS allocation, we are done. If we run Algorithm \ref{alg:reduced}, we can get a smaller instance $\cI'$ by removing at least one pair of agent and bundle. These two cases directly imply that the Algorithm~\ref{alg:n-agents} will terminate after $n$ rounds. According to the reduction rules in Algorithm \ref{alg:reduced}, every agent excluded from the reduced instance gets an MMS-feasible bundle. Thus we can conclude that the returned allocation of Algorithm \ref{alg:n-agents} is a $\gamma$-AMMS allocation for a general number of agents.
\end{proofof}

We next move to prove Lemma~\ref{lemma:gamma-partition}.
Depending on the size of the reduced instance, i.e., $|N'|$, we introduce different algorithms to compute a $(\gamma, 1, \ldots, 1)$-partition for some agent $i$.
The first algorithm is called Partition-Merging (see Algorithm~\ref{alg:MP}), in which we involve the MMS partition $P$ of agent $i$ for the original instance $\cI$.
Generally speaking, we consider the ``current state'' of the original MMS partition, i.e., $P'_j = P_j \cap M'$ for all $j\in N$.
We merge $n-k+1$ bundles with the smallest cost to be one, yielding a $k$-partition of items $M'$.
We show that when $|N'| > \lfloor \frac{n+1}{2} \rfloor$, the partition is a $(\gamma, 1, \ldots, 1)$-partition for agent $i$.

\begin{algorithm}[htbp]
        \caption{Partition-Merging($\cI, \cI', i$)} \label{alg:MP}
        \KwIn{An instance $\cI = (M, N, \bc)$, an instance $\cI' = (M', N', \bc')$ and an agent $i$}
        Let $k \gets |N'|$\;
        Let $P = \{P_1,\cdots,P_n\}$ be an MMS partition of  agent $i$ for instance $\cI$\; 
        Set $P'_j = P_j\cap M'$ for all $j\in N$\;
        Sort $\{P'_j\}_{j\in N}$ in the non-increasing order of $c_i(P'_j)$\;
        \KwOut{A partition $\{P'_1,\dots,P'_{k-1}, \bigcup_{j=k}^{n}P'_j\}$}
\end{algorithm}

\begin{lemma} \label{lemma:MP}
   Given a reduced instance $\cI' = (M',N',\bc')$ with $|N'| > \lfloor \frac{n+1}{2} \rfloor$, Algorithm~\ref{alg:MP} returns $P= \{P_1,\dots, P_k\}$ that is a $(\gamma,1,\dots,1)$-partition for agent $i$.
\end{lemma}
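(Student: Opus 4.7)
The plan is to combine two facts: (i) since $P = \{P_1,\dots,P_n\}$ is an MMS partition of agent $i$ for the \emph{original} instance $\cI$, each $c_i(P_j) \le 1$; and (ii) since $\cI'$ is a reduced instance, $c_i(M') \le |N'| = k$. The first fact will handle the $k-1$ ``small'' output bundles, and the second, combined with sorting, will handle the merged tail.

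First I would observe that $P'_j = P_j \cap M' \subseteq P_j$, so additivity gives $c_i(P'_j) \le c_i(P_j) \le 1$ for every $j \in [n]$. In particular, after sorting in non-increasing order of $c_i(P'_j)$, the first $k-1$ bundles in the output, $P'_1,\dots,P'_{k-1}$, each cost at most $1$ for agent $i$, as required.

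Next I would bound the cost of the merged bundle $\bigcup_{j=k}^n P'_j$. Because the sort is non-increasing, $c_i(P'_k),\dots,c_i(P'_n)$ are the $n-k+1$ smallest among the $n$ values $c_i(P'_1),\dots,c_i(P'_n)$, so their sum is at most a $\tfrac{n-k+1}{n}$-fraction of the total sum $\sum_{j=1}^n c_i(P'_j) = c_i(M')$. Using $c_i(M')\le k$ yields
\[
c_i\!\left(\bigcup_{j=k}^n P'_j\right) \;=\; \sum_{j=k}^n c_i(P'_j) \;\le\; \frac{n-k+1}{n}\cdot c_i(M') \;\le\; \frac{(n-k+1)\,k}{n}.
\]

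Finally, I would apply AM--GM to the two factors $(n-k+1)$ and $k$, whose sum is the constant $n+1$, to conclude $(n-k+1)\,k \le \left(\tfrac{n+1}{2}\right)^2 = \tfrac{(n+1)^2}{4}$. Hence the merged tail costs at most $\tfrac{(n+1)^2}{4n} = \gamma$, and combined with the first step the output is a $(\gamma,1,\dots,1)$-partition for agent $i$. The only subtle point -- hardly an obstacle -- is noticing that $\gamma$ was calibrated so that the AM--GM maximum matches exactly; the hypothesis $|N'|>\lfloor (n+1)/2\rfloor$ does not actually sharpen the argument, but it designates the regime where the authors intend to invoke Partition-Merging rather than the alternative CBF routine used when $k$ is small.
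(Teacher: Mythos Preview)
Your proposal is correct and follows essentially the same approach as the paper: both intersect the original MMS partition with $M'$, sort, and bound the merged tail by $\frac{(n-k+1)k}{n}$ via the averaging argument, then cap the quadratic by its maximum $\frac{(n+1)^2}{4n}$. Your remark is also accurate: the paper invokes the hypothesis $k>\lfloor (n+1)/2\rfloor$ only to make the final inequality strict, but a non-strict bound already suffices for a $(\gamma,1,\dots,1)$-partition, so the hypothesis is indeed just a regime marker rather than a logical necessity here.
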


\begin{proof}
    Let $P = \{P_1,\cdots,P_n\}$ be an MMS partition of $M$ for agent $i$.
    Based on $P$, we construct a partition $P'$ of item $M'$ where $P'_j = P_j\cap M'$ for all $j\in N$. 
    We assume w.l.o.g. that $P'$ is sorted in non-increasing order of $c_i(P'_j)$ (see Figure~\ref{fig:MP} for an example).

    \begin{figure}[htbp]
    \centering
    \begin{tikzpicture}[scale = 0.75]					
        \begin{axis}[
        ybar stacked,
        enlargelimits=0.14,
        width = 0.8\textwidth,
	  height = 0.5\textwidth,
        bar width=0.4, 
        bar shift=0pt,
        xtick=data,
        ymin=0.15,
        legend style={draw=none},
        legend pos = north west,
	ylabel =  Bundle Cost,
        ] 
        
        	\addplot[draw=black, fill=gray!50] coordinates{
        		(1,0.70)
                    (2,0.65)
                    (3,0.60)
                    (4,0.55)
                    (5,0.50)
                    (6,0.45)
                    (7,0.40)
                    (8,0.4)
        	};
        	\addplot[draw=black,postaction={pattern=north east lines}, fill=gray!10] coordinates{
                    (1,0.28)
                    (2,0.33)
                    (3,0.38)
                    (4,0.43)
                    (5,0.48)
                    (6,0.53)
                    (7,0.58)
                    (8,0.6)
            
        	};
        
        \draw [densely dashed] (-1,1)--(10,1); 
        \end{axis}
        \draw[decorate,decoration={brace,raise=10pt,amplitude=0.15cm},black] (9.3,-0.3)--(5.8,-0.3);
        \node at (7.5,-1.2) {Combine the last $4$ bundles};
      \node at (5.1,1.7) {$P'_5$};
      \draw [<->] (5.4,2.5)--(5.4,0.0);
    \end{tikzpicture}
    \caption{An illustrating example of Partition-Merging with $n=8$ and $k=5$.}
    \label{fig:MP}
    \end{figure}
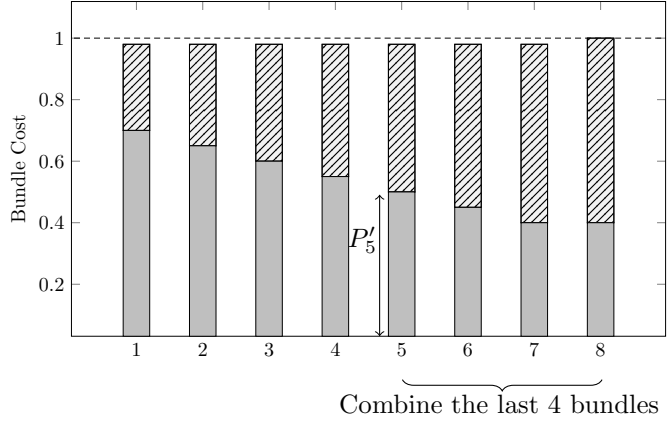

    By merging the last $|N|-|N'|+1$ bundles in $P'$, we have $\{P'_1,\dots,P'_{k-1}, \bigcup_{j=k}^{n}P'_j\}$ being a $k$-partition of $M'$.
    It is easy to verify that $c_i(P'_j)\le 1$ for every $j\in [k-1]$ since $c_i(P'_j)\le c_i(P_j)$ for all $j\in N'$. 
    For the last bundle $\bigcup_{j=k}^{n}P'_j$, we have
     \begin{equation*}
         c_i(\bigcup_{j=k}^{n}P'_j)\le  \frac{n-k+1}{n} \cdot \sum_{j=1}^n c_i(P'_j) <  \frac{n-k+1}{n} \cdot k= \frac{-k^2+(1+n)k}{n}, 
     \end{equation*}
     where the first inequality holds because $P'$ is sorted in non-increasing order of $c_i(P'_j)$ and the second inequality holds because $\sum_{j=1}^n c_i(P'_j) = c_i(M')<k$.

     Recall that $k > \lfloor \frac{n+1}{2} \rfloor$, we have
     \begin{equation*}
         c_i(\bigcup_{j=k}^{n}P'_j)  <\frac{-k^2+(1+n)k}{n} <\frac{(n+1)^2}{4n} = \gamma,
     \end{equation*}
     which completes our proof.
\end{proof}

In the following, we consider the case that $|N'| \leq \lfloor \frac{n+1}{2} \rfloor$, for which we propose the Capped-Bag-Filling Algorithm to compute a $(\gamma, 1, \ldots, 1)$-partition for some agent $i$.
At the beginning of the algorithm, we initialize $|N'|$ empty bundles.
We allocate items from the largest cost to the minimum, w.r.t. to $c_i$, while we only assign item $e$ to bundle $P$ if $c_i(P+e) \leq 1$. 
When it is not possible to allocate any item to any bundle, the remaining items will be assigned to the bundle with the lowest cost.

\begin{algorithm}[htb]
        \caption{Capped-Bag-Filling($\cI', i$)} \label{alg:CBF}
        \KwIn{An instance $\cI = (M', N', \bc')$ and an agent $i$}
        Let $k \gets |N'|$ \;
        Initialize $P = \{P_j\}_{j\in [k]}$ with $P_j \gets \emptyset$ \;
        Sort items with $c_i(e_1) \ge \dots \ge c_i(e_{|M'|})$\;
      
            \For{$j' = 1$ to $|M'|$}{
            \If{there exists a bundle $P_j$ such that $c_i(P_j+ e_{j'})\le 1$}{
                $P_j\gets P_j+ e_{j'}$; $M'\gets M'-e_{j'}$\;
            }
        }
        Sort $P$ with $c_i(P_1)\ge \dots\ge c_i(P_k) $\;
        $P_k\gets P_k\cup M'$ \;
        \KwOut{A partition $ \{P_1,\dots, P_k\}$}
\end{algorithm}

\begin{lemma} \label{lemma:CBF}
    Given a reduced instance $\cI' = (M',N',\bc')$ with $|N'| \le \lfloor \frac{n+1}{2} \rfloor$, Algorithm~\ref{alg:CBF} returns $P= \{P_1,\dots, P_k\}$ that is a $(\gamma,1,\dots,1)$-partition for agent $i$.
\end{lemma}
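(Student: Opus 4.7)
The plan is to show $c_i(P_k) \leq k^2/(2k-1)$, which implies $c_i(P_k) \leq \gamma$ since $k^2/(2k-1)$ is increasing in $k$ and $k \leq \lfloor(n+1)/2\rfloor$. Write $s = c_i(P^{(1)}_{\min})$ for the smallest cost among bundles at the end of the \textbf{for}-loop, $R$ for the set of items not placed during the \textbf{for}-loop, and $T = c_i(R)$. After the final sort-and-merge step, $P_k = P^{(1)}_{\min} \cup R$ with $c_i(P_k) = s + T$; the other bundles $P_j$ for $j<k$ are \textbf{for}-loop bundles with $c_i(P_j) \leq 1$ by construction. If $R = \emptyset$ then $c_i(P_k) = s \leq 1 \leq \gamma$, so assume $R \neq \emptyset$.

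The first step is to establish that every $e \in R$ satisfies $c_i(P^{(1)}_j) + c_i(e) > 1$ for all $j$: this holds at the moment $t_e$ when $e$ was rejected, and bundle costs only grow. In particular $c_i(e) > 1 - s$, so $T > |R|(1-s)$. Combined with $T \leq k(1-s)$ (which follows from $\sum_j c_i(P^{(1)}_j) \geq ks$ and $c_i(M') \leq k$), this forces $|R| \leq k-1$ and also gives the bound $s + T \leq k - (k-1)s$.

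The crucial step is to prove that $c_i(e) \leq s$ for every $e \in R$. I plan to argue by contradiction: suppose $c_i(e) > s$ for some $e \in R$, and let $J$ denote the bundle whose final cost equals $s$. Any item placed in $J$ strictly before time $t_e$ has cost $\geq c_i(e)$, by the sorted-decreasing processing order. If $J$ contained at least one such item, then $c_i(J)$ at time $t_e$ already exceeded $c_i(e) > s$, and since bundle costs never decrease, the final cost of $J$ would exceed $s$---contradicting the definition of $J$. If $J$ were empty at time $t_e$, then the rejection inequality $0 + c_i(e) > 1$ would force $c_i(e) > 1$, contradicting $c_i(e) \leq 1$. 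Either case gives a contradiction.

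Given this claim, $T \leq |R| \cdot s \leq (k-1)s$ and so $s + T \leq ks$. Together with the earlier bound, I obtain $c_i(P_k) \leq \min(ks,\, k-(k-1)s)$; elementary maximization shows that this minimum is maximized at $s = k/(2k-1)$ with value $k^2/(2k-1)$. Finally, since $k^2/(2k-1)$ is increasing in $k$ (its derivative is $2k(k-1)/(2k-1)^2 \geq 0$) and $k \leq \lfloor (n+1)/2 \rfloor$, a direct computation using $n \geq 2k-1$ yields $k^2/(2k-1) \leq (n+1)^2/(4n) = \gamma$. The main obstacle is the claim $c_i(e) \leq s$ for $e \in R$, which is not a priori obvious since one might fear that large items could be rejected; the resolution hinges on the interplay between the decreasing processing order and the monotonicity of bundle costs.
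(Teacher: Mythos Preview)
Your proof is correct and lands on the same bound $c_i(P_k)\le k^2/(2k-1)$ as the paper, but the intermediate step is genuinely different. The paper does not prove your pointwise claim $c_i(e)\le s$; instead it compares each leftover item to the \emph{first} item $e'_j$ placed in each bundle, shows $c_i(e)\le c_i(e'_j)$ for all $j$, and then applies an averaging argument over $\hat M=\{e'_1,\dots,e'_k\}\cup P'_k$ to obtain the absolute bound $c_i(P'_k)\le \frac{k(k-1)}{2k-1}$, which it substitutes into $c_i(P_k)\le 1+\frac{k-1}{k}\,c_i(P'_k)$. Your claim $c_i(e)\le s$ is formally weaker than the paper's (since $\min_j c_i(e'_j)\le s$), but it is exactly what is needed and leads to the clean two-sided estimate $c_i(P_k)\le \min\bigl(ks,\,k-(k-1)s\bigr)$, which you then optimize over $s$. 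The trade-off: the paper's averaging trick avoids the optimization but introduces the auxiliary set $\hat M$; your argument is more self-contained and makes the dependence on $s$ explicit, at the price of a small extremal computation. Both approaches also bound $|R|\le k-1$, though again by different means (yours via $|R|(1-s)<T\le k(1-s)$, the paper's via a one-item-per-bundle repacking contradiction).
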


\begin{proof}
    According to Algorithm \ref{alg:CBF}, input an instance $\cI'=(M',N',\bc')$ and an arbitrary agent $i \in N'$, we get a $k$-partition of items $M'$, see Figure \ref{fig:CBF} as an example.
    Denote by $P'_k$ the items that are assigned to bundle $P_k$ in line 12 of Algorithm \ref{alg:CBF}. 
    For any item $e\in P'_k$, we have $c_i(P_j+e)>1$ for all $j\in \{1, \ldots, k-1\}$ following the algorithm.

    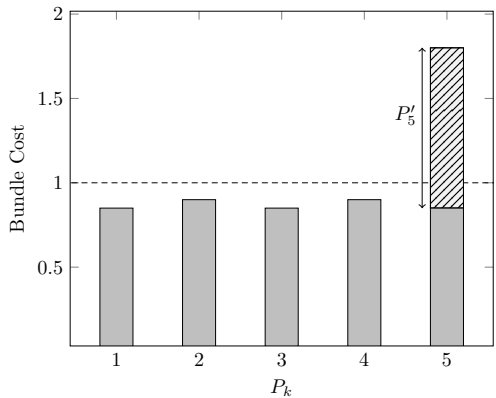
\begin{figure}[htbp]
    \centering
    \begin{tikzpicture}[scale = 0.75]						
        \begin{axis}[
        ybar stacked,
        enlargelimits=0.14,
        width = 0.6\textwidth,
	  height = 0.5\textwidth,
        bar width=0.4, 
        bar shift=0pt,
        xtick=data,
        ymin=0.25,
        legend style={draw=none},
        legend pos = north west,
        xlabel = $P_k$,
	ylabel =  Bundle Cost,
        ] 
        
        	\addplot[draw=black, fill=gray!50] coordinates{
        		(1,0.85)
                    (2,0.9)
                    (3,0.85)
                    (4,0.9)
                    (5,0.85)
        	};
        	\addplot[draw=black,postaction={pattern=north east lines}, fill=gray!10] coordinates{
                    (5,0.95)
            
        	};
        
        \draw [<->] (4.7,1.8)--(4.7,0.85);
        \node at (4.5,1.4) {$P'_5$};
        \draw [densely dashed] (-1,1)--(10,1); 
        \end{axis}
      
    \end{tikzpicture}
    \caption{An illustrating example of Capped-Bag-Filling for 5 agents.}
    \label{fig:CBF}
    \end{figure}
    
    We claim that the size of $P'_k$ can not exceed $k-1$, i.e., $|P'_k|\le k-1$. 
    Suppose by contradiction that $|P'_k|\ge k$, by reallocating the $|P'_k|$ items of bundle $P'_k$ to bundles $\{P_1,\cdots,P_k\}$ evenly, we have $\sum_{i\in [k]}c_i(P_i)>k$ which is a contradiction. 
    
    Denote by $e'_j$ the largest cost item in bundle $P_j$ for every $j\in [k]$. 
    Let $P = \bigcup_{j\in [k]}e'_j$ and $\hat{M} = P\cup P'_k$. 
    Since the items are allocated in descending order of their costs, we have $c_i(e'_1)\ge \cdots \ge c_i(e'_k)$ and $c_i(e'_k)\ge c_i(e) \;\forall e\in P'_k$. 
    Then we have 
    \begin{equation*}
        c_i(P'_k)\le \frac{|P'_k|}{|\hat{M}|}\cdot c_i(\hat{M})\le \frac{k-1}{2k-1}\cdot c_i(\hat{M}) \le  \frac{k-1}{2k-1}\cdot k.
    \end{equation*}
     where the last inequality holds due to $\hat{M}\subseteq M'$ and $c_i(M')\le k$. Due to line 11 of Algorithm \ref{alg:CBF}, we have $c_i(P_k\setminus P'_k)\le c_i(P_j)$ for every $j\in [k-1]$. Thus $c_i(P_k\setminus P'_k) \le \frac{k-c_i(P'_k)}{k}$. Then we have 
     \begin{equation*}
         c_i(P_k) = c_i(P'_k)+c_i(P_k\setminus P'_k) \le \frac{k+(k-1)\cdot c_i(P'_k)}{k} \le \frac{k+(k-1)\cdot \frac{k-1}{2k-1}\cdot k}{k} \le \frac{k^2}{2k-1}.
     \end{equation*}
    Since $k \le \lfloor \frac{n+1}{2} \rfloor $, we have 
    \begin{equation*}
         c_i(P_k) \le \frac{k^2}{2k-1} \le \frac{(n+1)^2}{4n} = \gamma,
     \end{equation*}
    which completes the proof.
\end{proof}

\section{Conclusion}
In this paper, we study the all-but-one MMS (AMMS) allocation where we guarantee all but one agent to achieve their maximin share. 
We show that the approximation ratio is $9/8$ for three agents and $4/3$ for four agents. 
For $n\ge 5$ agents, we provide an upper bound of $\frac{(n+1)^2}{4n}$ approximation.
%


The AMMS framework provides a novel approach to examining the approximation of MMS allocations from an optimization perspective. 
This alternative perspective may further deepen our understanding of the MMS concept. 
Moreover, AMMS naturally inherits all existing counterexamples of MMS, e.g., there is only one agent not assured of her MMS in the hard instance of~\cite{conf/wine/FeigeST21}. 
This indicates that AMMS is a natural and appropriate approximation method for MMS.
Given the minor gap between the hardness and MMS allocation, we propose the existence of $O(1)$-AMMS allocations for a general number of agents as an interesting open problem worth exploring.
Additionally, exploring whether our results for a small number of agents can be improved is also of interest. 
We also believe that the AMMS framework may inspire further research on other MMS relaxations, such as MMS allocations with item disposal or MMS allocations with subsidies.
It is noteworthy that any AMMS allocation can be converted into an MMS allocation with unallocated items (or with subsidies), only requiring to address the agent who is not guaranteed her MMS.



\bibliographystyle{abbrv}
\bibliography{arXiv}

\newpage
\appendix
\section{Complete Algorithms for Four Agents}\label{appendix:complete-algo}

\begin{algorithm}[H]
    \caption{$\frac{4}{3}$-AMMS-Four-Agents$(M, N, \bc)$}\label{alg:four-agents}
    \KwIn{An instance $\cI = (M, N, \bc)$ with $|N| = 4$}
    Initialize an MMS partition $P = \{ P_1, P_2, P_3, P_4 \}$ for agent $4$.\\
    Construct MMS-feasibility graph $G = (N \cup P, E)$.\\
    \eIf {there exists a perfect matching for  $G$} {
        Allocate each bundle to the matched agent based on perfect matching.\\
    }
    {
        Let $S^* \gets \argmax_{\{S \subseteq N: |S| > |L(S)| \}}|S|$.\\
        \uIf{$|S^*| = 2$}{
            Assume w.l.o.g. $S^* = \{ 1, 2 \}, L(S^*) = \{ P_1 \}$.\\
            Find the perfect matching for agents 3 and 4 and allocate the bundle to them respectively based on perfect matching.\\
            $M' \gets M \setminus (X_3 \cup X_4)$\;
            $\{B_1, B_2\}\gets$ Load-Balancing($M', \{1,2\}, c_1$)\;
            $X_2 \gets \argmin_{B\in \{B_1, B_2\} } c_2(B)$, $X_1\gets M'\setminus X_2$.\\
        }
        \ElseIf{$|S^*| = 3$} {
            \uIf{$|L(S^*)| = 2$}
            {
                Assume w.l.o.g. that $L(S^*) = \{ P_1, P_2 \}$, bundle $P_2$ is only liked by agent 3, $c_1(P_1) \leq c_1(P_2) \leq c_1(P_3) \leq c_1(P_4)$.\\
                \eIf{every agent likes only one bundle} {
                    $X_3 \gets P_2$, $X_4 \gets P_4$.\\
                    $\{B_1, B_2\}\gets$ Load-Balancing($P_1 \cup P_3, 2, c_1$).\\
                    $X_2 \gets \argmin_{B\in \{B_1, B_2\} } c_2(B)$, $X_1\gets P_1\cup P_3\setminus X_2$.\\
                }
                {
                    Assume agent 1 likes $P_1$ and $P_2$.\\
                    $X_4 \gets P_4$, $M' \gets M \setminus P_4$.\\
                    $B'_1 \gets P_2$, $\{B'_2, B'_3\}\gets$ Load-Balancing($P_1 \cup P_3, 2, c_1$).\\
                }
            }
            \ElseIf{$|L(S^*)| = 1$}{
                Let agent $1$ do an MMS partition $\mathbf{B} = \{ B_1, B_2, B_3, B_4 \}$.\\
                Let $b_{ij} = B_i \cap P_j, i, j \in \{ 1, 2, 3, 4 \}$.\\
                \eIf{there exists an atomic bundle $b_{ij}$ such that $c_1(P_j \setminus b_{ij}) > 1$} {
                    $X_4 \gets P_j$, $M' \gets M \setminus (P_j \cup B_i)$, $B'_1 \gets B_i \setminus b_{ij}$.\\
                    $\{B'_2, B'_3\}\gets$ Load-Balancing($M', 2, c_1$).\\
                }
                {
                    Assume w.l.o.g. that $c_1(P_1) \leq c_1(P_2) \leq c_1(P_3) \leq c_1(P_4)$.\\
                    $b_2 \gets \argmin_{i \in \{1, 2, 3, 4 \}}c_1(b_{i2})$, $b_3 \gets \argmin_{j \in \{1, 2, 3, 4 \}}c_1(b_{j3})$.\\
                    $X_4 \gets P_4$, $M' \gets M \setminus P_4$.\\
                    \eIf{$c_1(P_1) \leq \frac{4}{5}$} {
                        $B_1' \gets P_1 \cup b_2 \cup b_3, B_2' \gets P_2 \setminus b_2, B_3' \gets P_3 \setminus b_3$.\\
                    }
                    {
                        $B'_1 \gets P_1, B_2' \gets P_2 \cup b_3, B'_3 \gets P_3 \setminus b_3$.\\
                    }
                }
            }
            \tcp{Upon here we have $\{B'_1, B'_2, B'_3\}$ being a $(\frac{4}{3}, 1, 1)$-partition for agent $1$.}
            $(X_1, X_2, X_3) \gets \frac{4}{3}$-AMMS-Three-Agents$(\cI', B'_1, B'_2, B'_3)$\;
        }
    }
    \KwOut{An allocation $\bX = (X_1, X_2, X_3, X_4)$}
\end{algorithm}

\end{document}